\documentclass[a4paper,noarxiv,twocolumn]{quantumarticle}
\usepackage[utf8]{inputenc}
\usepackage[english]{babel}
\usepackage[T1]{fontenc}

\usepackage{lipsum}
\usepackage{mdframed}
\usepackage{booktabs}
\usepackage{longtable}

\usepackage[numbers]{natbib}
\usepackage{bm}
\usepackage{graphicx}
\usepackage{amsmath}
\usepackage{amsfonts}
\usepackage{amssymb}
\usepackage{amsthm}
\usepackage{amstext}
\usepackage{amsbsy}
\usepackage{amsopn}
\usepackage{amscd}
\usepackage{amsxtra}
\usepackage[colorlinks=true,allcolors=blue]{hyperref}
\usepackage{color}
\usepackage{soul}
\usepackage[dvipsnames]{xcolor}


\usepackage{ulem}

\def\phi{\varphi}
\def\epsilon{\varepsilon}

\def\Id{\textrm 1\!\!\!\!1}

\newcommand{\Mc}[1]{\mathcal{#1}}

\newcommand{\setR}{\mathbb{R}}

\newcommand{\setC}{\mathbb{C}}

\newcommand{\bra}[1]{\langle #1 |}
\newcommand{\ket}[1]{| #1 \rangle }

\newcommand{\ii}{i}
\newcommand{\A}{\hat a}
\newcommand{\ad}{\hat a^\dagger}

\everymath{\displaystyle}

\newtheorem{theorem}{Theorem}

\begin{document}

\title{Emergent gravity from the correlation of spin-$\tfrac{1}{2}$ systems coupled with a scalar field}





\author{Quentin Ansel}
\email{quentin.ansel@univ-fcomte.fr}
\affiliation{Institut UTINAM, CNRS UMR 6213, Universit\'{e} Bourgogne Franche-Comt\'{e}, Observatoire des Sciences de l'Univers THETA, 41 bis avenue de l'Observatoire, F-25010 Besan\c{c}on, France}
\orcid{0000-0002-4594-5978}

\date{\today}

\begin{abstract}
This paper introduces several ideas of emergent gravity, which come from a system similar to an ensemble of quantum spin-$\tfrac{1}{2}$ particles. To derive a physically relevant theory, the model is constructed by quantizing a scalar field in curved space-time. The quantization is based on a classical discretization of the system, but contrary to famous approaches, like loop quantum gravity or causal triangulation, a Monte-Carlo based approach is used instead of a simplicial approximation of the space-time manifold. This avoids conceptual issues related to the choice of the lattice. Moreover, this allows us to easily encode the geometric structures of space, given by the geodesic length between points, into the mean value of a correlation operator between two spin-like systems. Numerical investigations show the relevance of the approach, and the presence of two regimes: a classical and a quantum regime. The latter is obtained when the density of points reaches a given threshold. Finally, a multi-scale analysis is given, where the classical model is recovered from the full quantum one. Each step of the classical limit is illustrated with numerical computations, showing the very good convergence towards the classical limit and the computational efficiency of the theory.
\end{abstract}

\maketitle
\tableofcontents
\section{Introduction}

Emergent gravity is a field of growing interest. The idea that the curved space-time of general relativity is not fundamental but comes from an effective theory has been explored in plenty of different approaches, such as in string theory~\cite{klammer2008fermions,steinacker2010emergent,Viennot_2021,viennot_fuzzy_2022}, quantum graphity~\cite{konopka_quantum_2008,caravelli_properties_2011,quach_domain_2012}, the work of Carroll and a.l.~\cite{cao_space_2017,bao_hilbert_2017,PhysRevD.97.086003} and many other~\cite{padmanabhan2015emergent,andrea_mondino_optimal_2022,gorard_quantum_2020,gorard_relativistic_2020}. Each approach is usually well motivated, and it shields light on some key elements of what could be a complete theory of quantum gravity. Unfortunately, these approaches are not fully satisfactory yet~\cite{padmanabhan2015emergent}. If the lack of experimental evidence is left aside,the current issues come partially from computational problems: from the main equation it becomes hardly possible to make physical predictions that can be tested~\cite{universe5010035}, or it is arduous to connect altogether the main ingredients expected in a complete theory of quantum gravity~\cite{ashtekar2021short,yosifov2021aspects,physics5010001}. In this paper, I propose yet another step forward in the direction of a complete theory of emergent quantum gravity.

 The ideas presented here do not cast exactly in the main theories, but it can be seen at the crossing between several of them. One of the main sources of inspiration is the work of Carroll and a.l.~\cite{cao_space_2017,bao_hilbert_2017,PhysRevD.97.086003}  In their series of papers, they develop the idea that the Einstein field equation is encoded in some way by the correlation between parts of the wave function describing the entire universe. They postulate that the correlation, quantified by the quantum mutual information, gives us areas between adjacent region, and they are able to recover, at least partially, Einstein field equations. The relation between the mutual information and space-time areas is strongly motivated by different arguments, the main ones being that the entropy of the horizon is proportional to its area, and the mutual information follows an approximated area law~\cite{PhysRevLett.100.070502}. Unfortunately, the inverse radon transform used to recover the metric from the data of all areas is far from being simple and not well understood in dimensions larger than 2. Moreover, the coupling with matter fields is not straightforward, and it seems difficult to recover quantum field theory in curved space-time with an adiabatic elimination~\cite{Viennot_2021}. Nonetheless, the idea that geometric quantities are encoded in the correlation between subparts of a quantum state is interesting, and the fundamental discrete nature of space-time follows other approaches, such as causal-networks~\cite{ambjorn_causal_2013}, loop quantum gravity~\cite{rovelli_quantum_2004,rovelli_covariant_2014,ashtekar2021short}, and quantum graphity~\cite{konopka_quantum_2008,caravelli_properties_2011,quach_domain_2012}. In these approaches, space-time is given by "atoms of space" connected together in a lattice. One drawback of these approaches is that the connectivity between these atoms of space is not imposed by fundamental principles, and one usually assumes 4 valent nodes (discretization with tetrahedrons) or 6-valent nodes (discretization with cuboids), but this is only motivated by a computational convenience. This discretization can be thought of as a quantum gravity analog of the nearest neighbors approximation of solid-state physique, but the ultimate theory must be free of the choice of a lattice.
 
 Based on these observations, I propose a model of emergent gravity for which the main ingredients are the following: (i) the structure of space-time is encoded in the correlation of subparts of a quantum state, each subpart being tough as an "atom of space", a quantum region of space (ii) Correlations are related to the geodesic distance between two quantum regions, not the area (iii) Quantum field theory in curved space-time is recovered in the limit when gravitational degrees of freedom are eliminated adiabatically~\cite{Viennot_2021,viennot_fuzzy_2022} and when the size of the wave-packet is large compared to the Planck scale, and finally (iv) the quantum state describing the quantum space-time is similar to an ensemble of spin-$\tfrac{1}{2}$, a quantum region being described by a spin, and geometric quantities are given by a spin-spin correlation. The use of an ensemble of two-level systems is interesting because it limits the size of the Hilbert space, but we can imagine rewriting the theory with arbitrary spin-$j$ or even with harmonic oscillators. 
 
 This paper is organized as follows. In sec.~\ref{sec:Monte_Carlo_approx}, the Monte-Carlo discretization procedure is introduced. In Sec.~\ref{sec:quantum_scalar_field}, the scalar field is introduced and it is quantized using the material developed in Sec.~\ref{sec:Monte_Carlo_approx}. In Sec.~\ref{sec:Quantum space-time}, a quantum description of the gravitational field is developed so that quantum field theory in curved space-time is recovered in the appropriate limit. This part of the theory is still significantly incomplete, but the minimum is made to provide a well-defined theory for which explicit calculations en physical predictions can be made. In Sec.~\ref{sec:Multi-scales analysis}, a multi-scale analysis is provided, where the classical scalar field theory is recovered from the full quantum theory by means of successive approximations. Each approximation step is illustrated with numerical simulations. A conclusion and prospective views are given in Sec.~\ref{sec:conclusion}.

\section{From continuous to discontinuous field theory with a Monte-Carlo approximation}
\label{sec:Monte_Carlo_approx}
 
 As outlined in the introduction, space-time is assumed fundamentally discrete, and the continuous theory is obtained at a large scale, without assumptions on a lattice structure at the Planck scale. Here, quantum regions are not connected with the nearest neighbors, they are all connected together with more or less weight. This is quite different from the typical situation encountered in lattice quantum field theory (such as in Wilson’s approach of path integrals~\cite{christ_weights_1982,ren_matter_1988,teixeira_random_2013,
 yamamoto_lattice_2014,Brower:2016moq}, loop quantum gravity~\cite{rovelli_covariant_2014},...). The underlying idea is illustrated in Fig.~\ref{fig:discretization_idea}. Here, a probabilistic approach is used, and integrals are discretized with a Monte-Carlo method~\cite{caflisch_monte_1998}. 
 
\begin{figure*}
\begin{center}
a)
\includegraphics[width=0.3\textwidth]{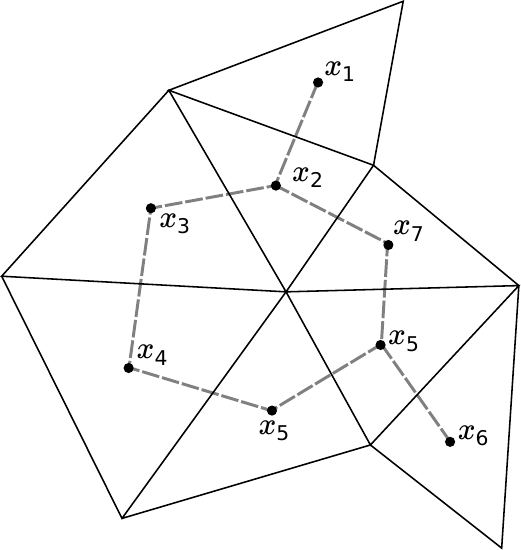}
b)
\includegraphics[width=0.3\textwidth]{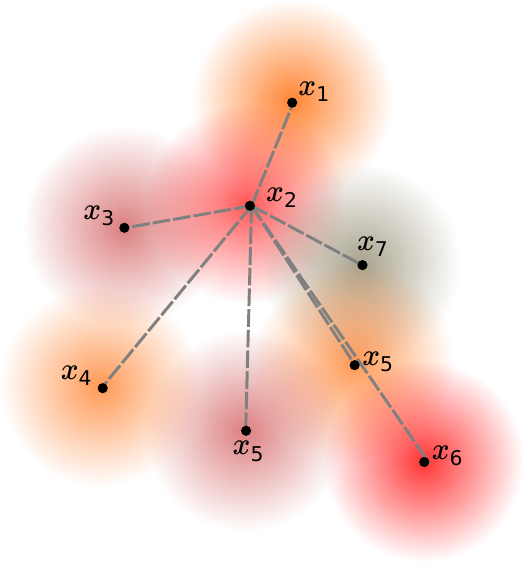}
\end{center}
\caption{a) A triangulated space. b) A discretized space with the vertices connected all together (only the edges between $x_2$ and the other vertices are shown) the weight between the vertices is a function of the distance between the points. In the continuum limit, the weights define density probability functions, depicted with colored gradients.}
\label{fig:discretization_idea}
\end{figure*}
 
 In the following, the points defining space are assumed \textit{uniformly distributed} on a manifold $\Sigma$, such that for any function $f:\Sigma \rightarrow \setC$, and any $R \subset \Sigma$ we have~\cite{zappa2018monte,de2018quasi}
 \begin{equation}
 \int_{R \in \Sigma} d^3x \sqrt{q(x)} f(x) = \lim_{N \rightarrow \infty} \frac{V_R}{N} \sum_{n=1}^N f(x_n)
 \end{equation}
where $V_R$ is the volume of $R$, and $q$ is the metric on $\Sigma$. In this paper, the Lebesgue measure is replaced by a probability distribution $p_{x}(y)$. Using the fact that $ \int_\Sigma d^3y \sqrt{q(y)} p_x(y)=1$, the normalization factor for an integration around the point $x$ is deduced to be 
\begin{equation}
\frac{V_x}{N}= \left(\sum_{n=1}^N p_{x}(y_n)\right)^{-1}.
\label{eq:estimation_of_the_volume}
\end{equation}
This result is written for a distribution with support on $\Sigma$, but it can be easily restricted to $R \subset \Sigma$.
This leads to a simple formula for the calculation of integral without reference to a coordinate system:
 \begin{equation}
 \begin{split}
\mathbb{E}[f]_x& = \int_{\Sigma} d^3y \sqrt{q(y)} p_x(y) f(y) \\
&\approx \frac{\sum_{n=1}^N f(y_n) p_x(y_n)}{\sum_{n=1}^N p_{x}(y_n)}.
\end{split}
 \end{equation}
To be true, $p_x(y)$ must be gauge invariant. A natural definition for such a quantity is based on the geodesic length between $x$ and $y$, which can be computed without too much difficulty as soon as the points are not too far from each other. In the following, it is assumed that
\begin{equation}
p_x(y) = \frac{\Mc N(R)}{(\sqrt{2 \pi} \epsilon)^3} \exp\left(- \frac{\sigma(x,y)}{\epsilon^2}\right)
\label{eq:assumption_probability_density}
\end{equation}
where $\epsilon$ has the dimension of a length, and $\sigma$ is Synge's world function~\cite{poisson_motion_2011}, which is equal to half the square of the geodesic length, and $\Mc N(R)$ is a normalization coefficient that depends on the space-time curvature. It is equal to 1 in a flat space.
In this setting, $\epsilon$ provides a natural notion of distance. It is not fixed by first principles, but it is assumed of the order of the Planck length, i.e., $\epsilon^2 \sim \hbar G/c^3 $. At a first sight, it is not obvious that Eq.~\eqref{eq:assumption_probability_density} gives a valid probability distribution in a curved space-time. This aspect is explored in appendix~\ref{sec:Normal law in a curved space-time} with a taylor expansion  of Synge's world function in Riemann normal coordinates.


%
The discretization procedure described here above can be easily generalized to the case when the distribution of points is not homogeneously distributed on $\Sigma$, but is distributed homogeneously with respect to a given coordinate system. In such a case, the previous formulas can be adapted using the replacement rule $p_x(y)\rightarrow \sqrt{q(y)} p_x(y)$.

\section{Discrete quantum scalar field theory in curved space-time}

\label{sec:quantum_scalar_field}

\subsection{Hamiltonian of a scalar field} 
Now that the general setting for the discretization of a field theory is made, a specific example can be studied. As a toy model, a real scalar field $\phi$ is considered. The starting point of the presentation is the Lagrangian density~\cite{khavkine_algebraic_2015,gerard_introduction_2018}:
 \begin{equation}
 \Mc L=\frac{\sqrt{-g}}{2} \left[ g^{\mu\nu} (\partial_\mu \phi)( \partial_\nu \phi) + m^2 \phi^2 \right]
 \label{eq:Lagrangian_scalar_field}
 \end{equation}
 where $g_{\mu \nu}$ is the metric tensor of the space-time manifold $\Mc M$, $\sqrt{-g} \equiv \sqrt{- \det g}$, and $m$ is the mass. Both $g$ and $\phi$ depend on the space-time position $x$. In the following, a 3+1 splitting is performed, to define a Hamiltonian~\cite{misner_gravitation_1973,rovelli_quantum_2004}. With this splitting, the metric is rewritten as follows: $g_{00}= -\mathsf{N}  + \mathsf{N}^a \mathsf{N}_a$, $g_{0a}=\mathsf{N}_a$, and $g_{ab} = q_{ab}$, where coordinate indices with Latin characters run over space coordinates only (i.e., $a,b=1,2,3$), and $q$ is the metric of a space sub-manifold $\Sigma \subset \Mc M$. With this notation, the volume density is $\sqrt{-g}=\sqrt{q}\mathsf{N}$. The next step is to determine the conjugated momentum of $\phi$. Using $\Pi = \tfrac{\partial \Mc L}{\partial (\partial_0 \phi)}$, one can determine
\begin{equation}
\Pi = \sqrt{q}\mathsf{N} g^{0 \nu} ~\partial_\nu \phi,
\end{equation}
and the Hamiltonian reads
\begin{equation}
H =\frac{1}{2} \int_\Sigma d^3 x~ \left(\frac{\Pi ^2}{\sqrt{q}\mathsf{N}} + \sqrt{q}\mathsf{N}\left[  q^{ab} \partial_a \phi \partial_b \phi + m^2 \phi^2 \right]\right).
\end{equation}
The Klein-Gordon equation in curved space-time is then recovered using Hamilton’s equations $\tfrac{\delta H}{\delta \phi} = - \partial_0 \Pi$ and $\tfrac{\delta H}{\delta \Pi} =  \partial^0 \phi$. To simplify the analysis, the coordinate gauge choice $\mathsf{N}=1$ and $\mathsf{N}_a=0$  is made. This avoids the use of many subtleties of algebraic QFT in curved space-time~\cite{khavkine_algebraic_2015,gerard_introduction_2018}, and thus, the derivation of a quantum theory is significantly simplified. A first look at a covariant approach to the theory is presented in Appendix~\ref{sec:covariant_theory}. The natural next step is to introduce complexifed field variables. They are defined by
 \begin{align}
\label{eq:def_a}
a(x)=\sqrt{\frac{m  }{2}}\left(\phi + \frac{\ii}{m \sqrt{q}}\Pi \right) \\
\label{eq:def_a_dagger}
a^\dagger (x)=\sqrt{\frac{m }{2}}\left(\phi - \frac{\ii}{m \sqrt{q}} \Pi \right).
\end{align}
With these new variables, the Hamiltonian takes the form
\begin{equation}
\begin{split}
H =&\int_\Sigma d^3 x \sqrt{q}~ \left(m\left(a^\dagger a + \frac{1}{2}\right)\right. \\
&\left. + \frac{1}{4m^2}  q^{ab} \partial_a (a^\dagger +a) \partial_b (a^\dagger +a) \right).
\end{split}
\label{eq:hamiltonian_complex_v1}
\end{equation}
Contrary to the standard approach, the Hamiltonian is not diagonalized by the complexified variables~\cite{itzykson_quantum_2012}. This enables the development of a quantum theory where the gravitational field plays a role only in the coupling terms between different space-time positions (the term with derivatives). 

\subsection{Discretization procedure}
\label{sec:discretization_scalar_field_Hamiltonian}

So far, the theory is given by a continuous space-time background. This section is devoted to its discretization. The most subtle point concerns the discretization of the coupling Hamiltonian $\partial^a \phi \partial_a \phi$, which must return a term of the form $\nabla_a \nabla^a \phi$ in the equation of motion, $\nabla$ being the covariant derivative  \footnote{In a curved space-time, the covariant derivative $\nabla$ replaces the usual derivative $\partial$, to take into account the modification of the tangent space $T_x\Mc M$ when a vector is moved in the manifold. For a scalar field, $\nabla$ reduces to $\partial$, but $\partial_a \phi = \nabla_a \phi$ is a vector and its derivative must be computed with $\nabla$~\cite{misner_gravitation_1973}.}. To apply the procedure of Sec.~\ref{sec:Monte_Carlo_approx}, Stokes's theorem~\cite{misner_gravitation_1973} must be used and a field vanishing on the boundary of the domain must be assumed. Therefore,
\begin{equation}
\int d^3x~ \sqrt{q}\mathsf{N}~  \partial_a \phi \partial^a \phi \rightarrow - \int d^3x~ \sqrt{q}\mathsf{N} ~ \phi~  \nabla_a \nabla^a \phi.
\end{equation}
with this setting, the Laplacian is related to the average of the field around a given position. More precisely, one has
 \begin{equation}
 \begin{split}
 \mathbb{E}[ \phi ]_x & = \int_\Sigma d^3y \sqrt{q}~ p_x(y) \phi(y) \\
 & =\phi(x) + \int_\Sigma d^3y \sqrt{q}~ p_x(y) ~y^a \partial_a \phi(x) \\
 &~ ~ ~ ~~  + \frac{1}{2} \int_\Sigma d^3y \sqrt{q}~ p_x(y)~  y^a y^b \nabla_a \nabla_b \phi (x) + ...  \\
 & = \phi(x) + \frac{\epsilon^2}{2}\nabla_a \nabla^a \phi(x) + O(\epsilon^3),
\end{split}
\label{eq:taylor_expand_mean_phi_square}
 \end{equation}
and as a consequence,
 \begin{equation}
 \nabla_a \nabla^a \phi(x) = \frac{2}{ \epsilon^2}\left[  \mathbb{E}[ \phi ]_x - \phi(x) \right] + O(\epsilon^3).
 \label{eq:secon_order_approx_kinetic_energy_density}
 \end{equation}
This result is easily obtained in a flat space, but a few more cautions must be taken in curved space. The validity of the $O(\epsilon^3)$ approximation is explored in appendix~\ref{sec:Normal law in a curved space-time}.

Next, the last equation is plugged into the integral over $\Sigma$, and a Monte-Carlo approximation is made. The result is
\begin{equation}
\int_\Sigma d^3y \sqrt{q}\mathsf{N}  \phi  \nabla_a \nabla^a \phi \approx \frac{2 V }{\epsilon^2 N}\sum_{n=1}^N \mathsf{N}_n \left[\phi_n \mathbb{E}[ \phi ]_n  - \phi^2_n\right].
\label{eq:first_order_approx}
\end{equation}
The subscript $n$ is used to specify that quantities are evaluated at the point $x_n$.

Complexified variables can be restored and Eq.~\eqref{eq:first_order_approx} can be inserted into the Hamiltonian~\eqref{eq:hamiltonian_complex_v1}. The theory is quantized straightforwardly by promoting the complexified field variables into creation and annihilation operators acting on a Fock space. The quantum Hamiltonian is

\begin{equation}
\begin{split}
\hat H =&\frac{V }{N} \sum_{n=1}^N  \left(m\left(\ad_n \A_n + \frac{1}{2}\right)\right. \\
&\left. - \frac{1}{2 \epsilon^2 m } \left[ (\ad_n + \A_n) \mathbb{E}[( \ad + \A)]_{x_n}  - (\ad_n + \A_n)^2\right] \right),
\end{split}
\label{eq:hamiltonian_quantum}
\end{equation}
 with $\ad_n$ and $\A_n$ the creation/annihilation operators at the position $x_n$. Despite the relatively simple expression of the Hamiltonian, the average $\mathbb{E}$ hides a factor $V/N$, which plays the role of an unknown coupling constant between oscillators. To circumvent this issue, the Hamiltonian is factorized with a global factor $V/N$, and at each position $x_n$ the volume is evaluated with the relation~\eqref{eq:estimation_of_the_volume}. The global factor is only responsible for a rescale of the time coordinate in the Schrödinger equation. The final result is:
\begin{equation}
\begin{split}
\hat H =&\frac{V^2 }{N^2} \sum_{n=1}^N   \left( \sum_{k=1}^n p_{x_n}(x_k)\right)\left(m\left(\ad_n \A_n + \frac{1}{2}\right)  \right. \\
& \left. + \frac{1}{ 2 \epsilon^2 m }(\ad_n + \A_n)^2 \right) \\
& - \frac{1}{2 \epsilon^2 m } \left( (\ad_n + \A_n) \sum_{k=1}^N p_{x_n}(x_k )( \ad_k + \A_k)  \right) .
\end{split}
\label{eq:hamiltonian_quantum_final}
\end{equation}
This Hamiltonian shares very strong similarities with the Bose-Hubbard model~\cite{PhysRevB.40.546,dutta2015non,mivehvar2021cavity,arovas2022hubbard,
aidelsburger2022cold}. In particular, this model was used to simulate quantum fields in curved space-times~\cite{BEMANI2018186}. While the standard approach is to start from a given lattice configuration of the Bose-Hubbard model and to compute the resulting effective space-time, here the reverse approach is made. However, Eq.~\eqref{eq:hamiltonian_quantum_final} is free from the nearest neighbor approximation, and it provides a simple mapping between the space-time geometry and the parameters of the Hamiltonian.

\section{Quantum space-time}
\label{sec:Quantum space-time}

In the previous sections, a formalism that allows us to discretize a quantum scalar field in space was developed, where gravitational degrees of freedom have an influence only through an interaction Hamiltonian that couples the fields at different locations. With these materials at hand, a (partial) quantum theory for the gravitational field consistent with QFT in curved space-time can be constructed.

A closer look at Eq.~\eqref{eq:hamiltonian_quantum_final} highlights that in the interaction Hamiltonian, the geometry of $\Mc M$ is hidden in $P_{x_n}(x_k)$. These weights are the quantities that must be quantized.

The quantization relies on several assumptions. In particular, the geometry of the gravitational state is encoded in the correlation between different sites of an ensemble of spin-$\tfrac{1}{2}$, a spin being associated with a point $x_n$. Then, a general quantum state $\ket{\Psi_G}$ is a linear combination of vectors of $\Mc H_G =\setC^{2N}$.

Many different operators can be used to describe a two-point correlation function. Here, the following choice is made:
\begin{equation}
\hat C_{nk} = \hat \sigma_+(x_n) \hat \sigma_-(x_k) +\hat \sigma_+(x_k) \hat \sigma_-(x_n) 
\label{eq:def_operator_Cnk}
\end{equation}
with $\hat \sigma_-(x_n)$ and $\hat \sigma_+(x_n)$ the spin creation/annihilation operators associated with the point $x_n$.
The mean value of this operator, $C_{nk} =\bra{\Psi_G} \hat C_{nk} \ket{\Psi_G}$, gives us information on how much two regions are correlated, and more specifically how much two regions are entangled since a superposition of different states is necessary to get $C_{nk} \neq 0$. 

To relate correlations and geometric data of the manifold, I propose to postulate the following relation between the correlation and the geodesic distance:
\begin{equation}
C_{nk}^2 = \alpha ~ \exp \left(- \frac{\sigma(x_n,x_k)}{\epsilon^2}\right),
\label{eq:def_coef_Cnk}
\end{equation}
with $\alpha$ a proportionality coefficient. This definition, motivated by a numerical investigations, requires the exclusion of the terms $n=k$ in the sum. This is because $\hat \sigma_+(x_n) \hat \sigma_-(x_n) +\hat \sigma_+(x_n) \hat \sigma_-(x_n) = \Id_2$, and thus, by taking the expectation value, it is not possible to recover the coefficient $\alpha$. As a consequence, the condition $p_{x_n}(x_n) = 0$ must be imposed in Eq.~\eqref{eq:hamiltonian_quantum_final}. This is not a problem for the convergence of the Monte-Carlo approximation scheme since it concerns a modification of the probability distribution whose Lebesgue measure is equal to zero.

Eq.~\eqref{eq:def_coef_Cnk} has a very simple physical interpretation. The larger the distance between $x_n$ and $x_k$, the smaller the correlation. In order to suppress the quantum fuzziness at large scales, the correlation must decay sufficiently fast. A normal law (see Eq.~\eqref{eq:assumption_probability_density}) is assumed for convenience, but it is not excluded that another distribution can be more adapted to the problem. For example, $C_{nk}^2 \propto \exp(- \sqrt{2 \sigma_{nk}}/\epsilon)$ could be also a viable choice. The validity of this ansazt is explored numerically in Fig.~\ref{fig:convergence_approximation_scheme}. The approach of the numerical investigation is the following. A cost function $\Mc F = \frac{1}{N}\sum_{n=1}^{N}\sum_{n=1}^{N} | \alpha e^{- \sigma_{nk}/\epsilon^2} - C_{nk}^2(\Psi_G)|^2$ is minimized with respect to the vector components of $\ket{\Psi_G}$. For a different number of points (i.e., a different number of spins), a quantum state is optimized numerically such that $C_{nk}^2$ is as close as possible to $\alpha e^{- \sigma_{nk}/\epsilon^2}$. The value of $\alpha$ is fixed by hand, to simplify the optimization process. Specific details on these numerical computations are gathered in Appendix~\ref{sec:Details on the numerical optimizations}. Different kinds of configurations and spaces are investigated, with flat and curved spaces, and with points organized randomly or regularly. In all situations, very low values of $\Mc F$ are obtained, except when the density of points is too large. For example, in the case of Fig.~\ref{fig:convergence_approximation_scheme} b), the value of $\Mc F$ drops significantly when $L$ reaches a given threshold ($L/\epsilon \approx 1.5$ for the cube and $L/\epsilon \approx 1.2$ for the 3-simplex). The threshold depends on the number of spins and it seems to correspond to $C_{nk} \approx 1/\sqrt{N}$. Further investigations may clarify this point, but this numerical result has an important physical consequence: \textit{there is a natural minimum scale for which we can obtain a classical geometry}. The value of the threshold depends on $\alpha$, and thus it is not clear at this point in which exact circumstances it is possible to encounter a quantum space-time. This issue will be investigated elsewhere.

\begin{figure}
\includegraphics[width=\columnwidth]{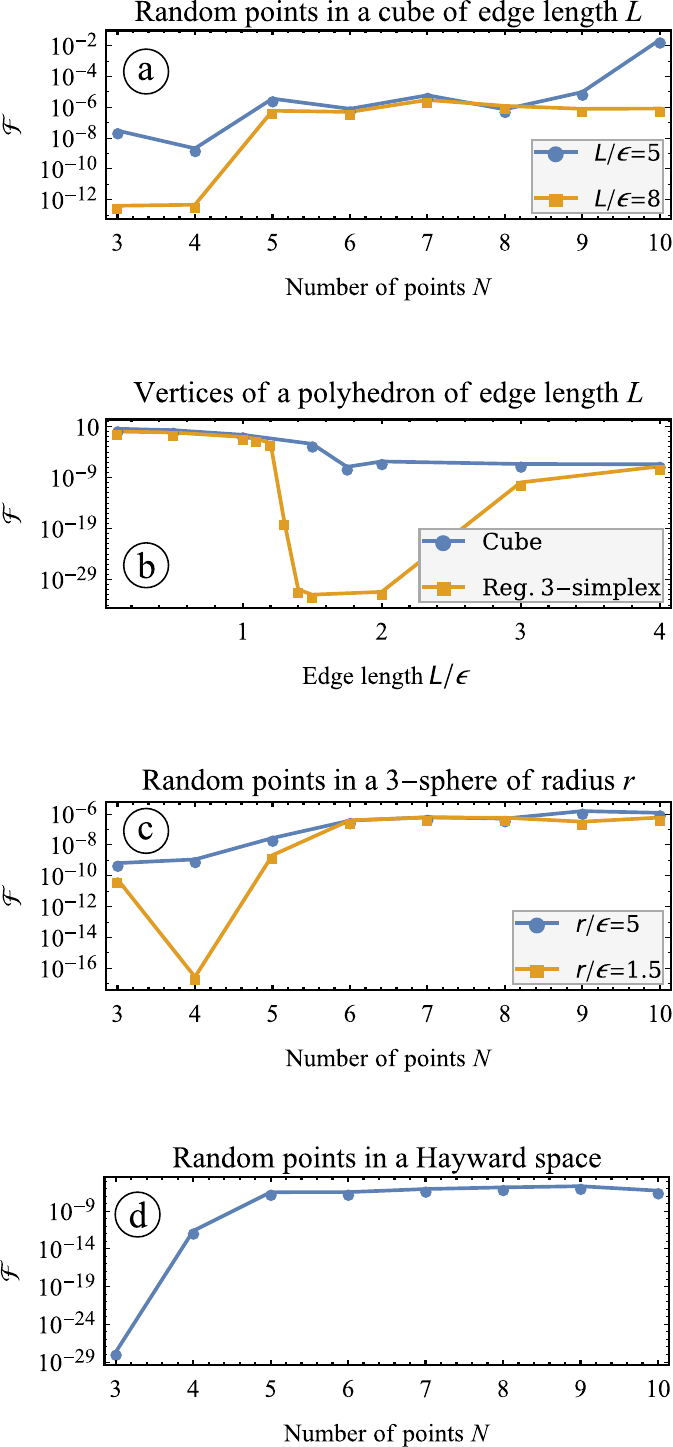}
\caption{Test of the validity of the gravitational wave function ansazt, using the numerical minimization of the cost function $\Mc F = \frac{1}{N}\sum_{n=1}^{N}\sum_{n=1}^{N} |\alpha \exp(- \sigma(x_n,x_k)/\epsilon^2) - C_{nk}^2(\Psi_G)|^2$. The parameters to optimize are vector components of $\ket{\Psi_G}$. Different case studies are considered. In the first one (panel a), the locations of $N$ points are generated randomly in a cube of edge length $L/\epsilon$. In the second case (panel b), the number of points is fixed, and they are located at the vertices of a regular polyhedron (a cube or a 3-simplex), the edge length being varied. In the third case (panel c), the situation is similar to the first one, but the cube is replaced by a 3-sphere of radius $r$. In the last case (panel d), points are randomly sampled in a Hayward space-time~\cite{PhysRevLett.96.031103}. Further details concerning the numerical calculations are given in Appendix~\ref{sec:Details on the numerical optimizations}.}
\label{fig:convergence_approximation_scheme}
\end{figure}

The Hamiltonian of the scalar field \eqref{eq:hamiltonian_quantum_final} can be reconstructed with the information given by $C_{nk}$. However, the square of $C_{nk}$ is required, to ensure a positive quantity that can be assimilated with a probability. This cannot be realized as the mean value of a linear operator acting on $\ket{\Psi_G}$, the quadraticity can be recovered with a duplicate of the state and a tensor product. We introduce $\ket{\tilde \Psi_G} = \ket{\Psi_G} \otimes \ket{\Psi_G}$, and with this state, $C_{nk}^2$ is obtained with the relation
\begin{equation}
\bra{\tilde \Psi_G} \hat C_{nk} \otimes \hat C_{nk}\ket{\tilde \Psi_G} = C_{nk}^2.
\end{equation}

The full quantum Hamiltonian is obtained by replacing $P_{x_n}(x_k)$ by $\hat C_{nk} \otimes \hat C_{nk}$ in Eq.~\eqref{eq:hamiltonian_quantum_final}. The parameter $\alpha$ is omitted since it can be factorized with the term $V^2/N^2$, which is already a global scaling factor of the Hamiltonian.

The dynamics of the full quantum system are given by the Schrödinger equation
\begin{equation}
\frac{d }{dt} \ket{\Psi} = - \frac{\ii N}{V} \left( \hat H + \hat H_G\right) \ket{\Psi},
\end{equation}
where $\hat H$ is the Hamiltonian~\cite{misner_gravitation_1973,rovelli_quantum_2004} given in Eq.~\eqref{eq:hamiltonian_quantum_final}, and the origin of factor $N/V$ is explained in appendix~\ref{sec:classical_hamiltonian}. $\hat H_G$ is a gravitational Hamiltonian. Deriving the explicit expression of $\hat H_G$ is a nontrivial key point of the theory. It is left unresolved in this article. Both $\hat H$ and $\hat H_G$ act on gravitational degrees of freedoms, but in the low coupling limit, $\ket{\Psi}$ approximately take the form of a tensor product at any time, i.e. $\ket{\Psi}\approx\ket{\tilde{ \Psi}_G}\otimes \ket{\Psi_{S.F.}} $ (Born approximation). This limit is not obtained by $m \rightarrow 0$, but when $\Vert \hat H \ket{\Psi} \Vert \ll \Vert \hat H_G \ket{\Psi} \Vert$. By taking the trace over gravitational degrees of freedom, one gets:
\begin{equation}
\frac{d }{dt} \ket{\Psi_{S.F.}} = - \frac{\ii N}{V} \left( \langle \hat H \rangle_{\tilde \Psi_G} + \langle\hat H_G \rangle_{\tilde \Psi_G} \Id \right) \ket{\Psi_{S.F.}}
\end{equation}
The mean value of $\hat H_G $ returns only an irrelevant phase from the point of view of the scalar field, and thus, the Schrödinger equation can be redefined as
\begin{equation}
\frac{d }{dt} \ket{\Psi_{S.F.}} = - \frac{\ii N}{V}  \langle \hat H \rangle_{\tilde \Psi_G}  \ket{\Psi_{S.F.}}.
\label{eq:effective_schordinger_scalar_field}
\end{equation}
It gives us the dynamics of the scalar field in the limit of quantum fields in curved space-time. This is a mean-field Schrödinger equation similar to the one encountered in quantum optics~\cite{peng_introduction_1998,gardiner_quantum_2004,breuer_theory_2007}. From the point of view of open quantum systems, this corresponds to a first-order adiabatic elimination~\cite{azouit_adiabatic_2016}. The second-order expansion introduces non-unitary dynamics and an effective Lindblad equation must replace Eq.~\eqref{eq:effective_schordinger_scalar_field}.  The second-order correction goes beyond the standard quantum field theory in curved space-time and it incorporates the notion of gravitational decoherence~\cite{Bassi_2017,viennot2017adiabatic}. The rest of this paper is focused on the first-order approximation and only Eq.~\eqref{eq:effective_schordinger_scalar_field} is considered.

\section{Multi-scales analysis}

\label{sec:Multi-scales analysis}
In the previous sections, a quantum theory for the scalar field and the gravitational field was developed, such that quantum field theory in curved space-time is recovered when suitable limits are taken. Several aspects have been omitted and the goal of this section is to go deeper in the analysis, using numerical calculations. 

\subsection{Classical and non-classical gravitational states}

\begin{figure*}[t!]
\includegraphics[width=\textwidth]{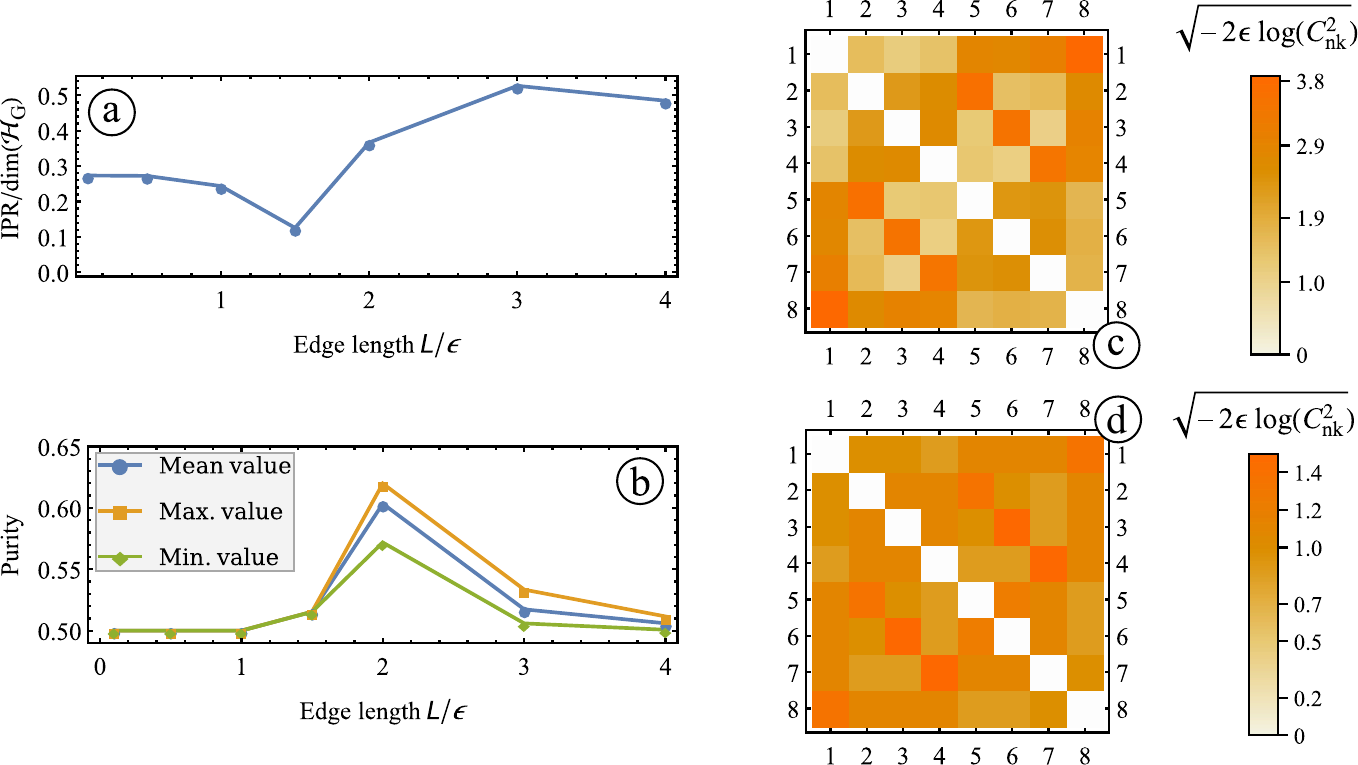}
\caption{Panel a) shows the IPR of $\ket{\Psi_G}$, given by $IPR = \left(\sum_{m=1}^{dim {\Mc H_G}} |\Psi_{G,m}|^4\right)^{-1}$ as a function of the edge length $L$ for the optimized quantum states whose final cost function are given in Fig.~\ref{fig:convergence_approximation_scheme} b). In this situation, the optimized quantum state aims to describe a classical geometry with 8 points located at the vertices of a cube. We recall that when $L/\epsilon\leq 1.5$, the state fails to describe the target geometry, but when $L/\epsilon >1.5$ the geometry is well described by the $\ket{\Psi_G}$. Panel b) is the same as panel a), but it shows the mean, max. and min. values of the purity $\text{Tr}[\hat \rho(x_n)^2]$ of each spin reduced states. Panel c) and d) show the geodesic distance between each point, as given respectively by the optimized state at $L/\epsilon=2$ and $L/\epsilon =0.1$. The geodesic distance is deduced from Eq.~\eqref{eq:def_coef_Cnk}, with value $\alpha = (\sqrt{2\pi} \epsilon)^3 $ (fixed for the numerical optimization), and it reads: $l_{nk}=\sqrt{-2\epsilon \log(C_{nk}^2)}$.}
\label{fig:gravity_quantum_state}
\end{figure*}

First, a few properties of the gravitational quantum state are investigated by computing a few relevant quantities. A first one is the inverse participation ratio (IPR)~\cite{kramer_localization_1993}, defined by $IPR = \left(\sum_{m=1}^{dim {\Mc H_G}} |\Psi_{G,m}|^4\right)^{-1}$, with $\Psi_{G,m}$ the vector components of $\ket{\Psi_{G}}$ in the canonical basis of $H_G$. Roughly, the IPR gives us the number of basis states on which the state is decomposed with equal weights. This provides us some information on the complexity of the quantum state, and how many basis states are used to encode the data of a classical space-time. A second interesting quantity is the purity of the reduced state associated with each spin. 

The purity associated with a  point $x_n$ is defined by $\text{Tr}[\hat \rho(x_n)^2]$~\cite{gardiner_quantum_2004,breuer_theory_2007} , with $\hat \rho(x_n)$ the reduced density matrix of the corresponding spin. Since the system is a spin-$\tfrac{1}{2}$, the reduced density matrix can be easily computed using $\hat \rho(x_n) = (\Id_2 + \sum_{i=1}^3 \hat \sigma_i \bra{\Psi_G} \hat \sigma_i(x_n) \ket{\Psi_G})/2$, with $\sigma_i$ the Pauli matrices. The purity is a quantity describing how the density matrix is close to a pure state. The purity of a pure state is equal to 1, and the minimum value is 0.5 for the state $\hat \rho = \Id_2 /2$.

The IPR and the averaged purity of each $x_n$ are given in Fig.~\ref{fig:gravity_quantum_state} a) and b), for the quantum states optimized to describe a cube, whose cost functions $\Mc F$ are given in Fig.~\ref{fig:convergence_approximation_scheme} b). With these states, there are the ones that describe a classical geometry, and the ones that do not. The physical relevance of this second kind of state is less important than the first one, because they fail in the description of a classical object, and there is no guarantee that they describe a true physical situation. However, they give us elements of comparison with the states describing a classical geometry. We observe that the IPR is usually quite high (between 30\% and 50\% of the total dimension of the Hilbert space) and the purity is low, close to the minimum value of 0.5. The local states are therefore highly statistical mixtures. The relatively high IPR and low purity suggest that the states are quite highly entangled. Interestingly, the data of Fig.~\ref{fig:gravity_quantum_state} a) and b) clearly indicates the change of behavior near $L/\epsilon = 1.5$. This observation, already commented in the previous section, requires a deeper analysis to explain its origin.

Having a large-scale entanglement and locally mixed states seem to be necessary conditions to obtain classical geometries, but we see that there exist non-classical states that have similar behaviors. Only the correlation function allows us to distinguish a classical to a non-classical geometry. In figure  Fig.~\ref{fig:gravity_quantum_state} c) and d) are plotted the geodesic distance between the $x_n$ for the optimized state computed with $L/\epsilon = 2$ and $L/\epsilon = 0.1$. They are respectively classical and non-classical states.  The geodesic distance is computed from $C_{nk}$ by inverting Eq.~\eqref{eq:def_coef_Cnk}, and using $\alpha = (\sqrt{2\pi} \epsilon)^3 $ (fixed for the numerical optimization). It reads $l_{nk}=\sqrt{-2\epsilon \log(C_{nk}^2)}$. In the classical case, the distances correspond to the ones between the vertices of a cube of edge length 2 (they are equal to $2$, $2\sqrt{2}\approx 2.8 $ and $2\sqrt{3}\approx 3.5$). In the nonclassical case, all the distances are $\approx 1.5$. This is not possible unless the geometric figure is a hypercube of dimension equal to 8, or if the interpretation of well-located points in a 3D space is lost. This ensemble could also be seen as a kind of fuzzy object. This object is divided into well-identifiable subsystems but do not have well defined positions in a 3D space. They are identically close to one another, violating the basis axiom of normed 3D spaces.

\subsection{Classical limit of the quantum scalar field}

\begin{figure*}[t!]
\includegraphics[width=\textwidth]{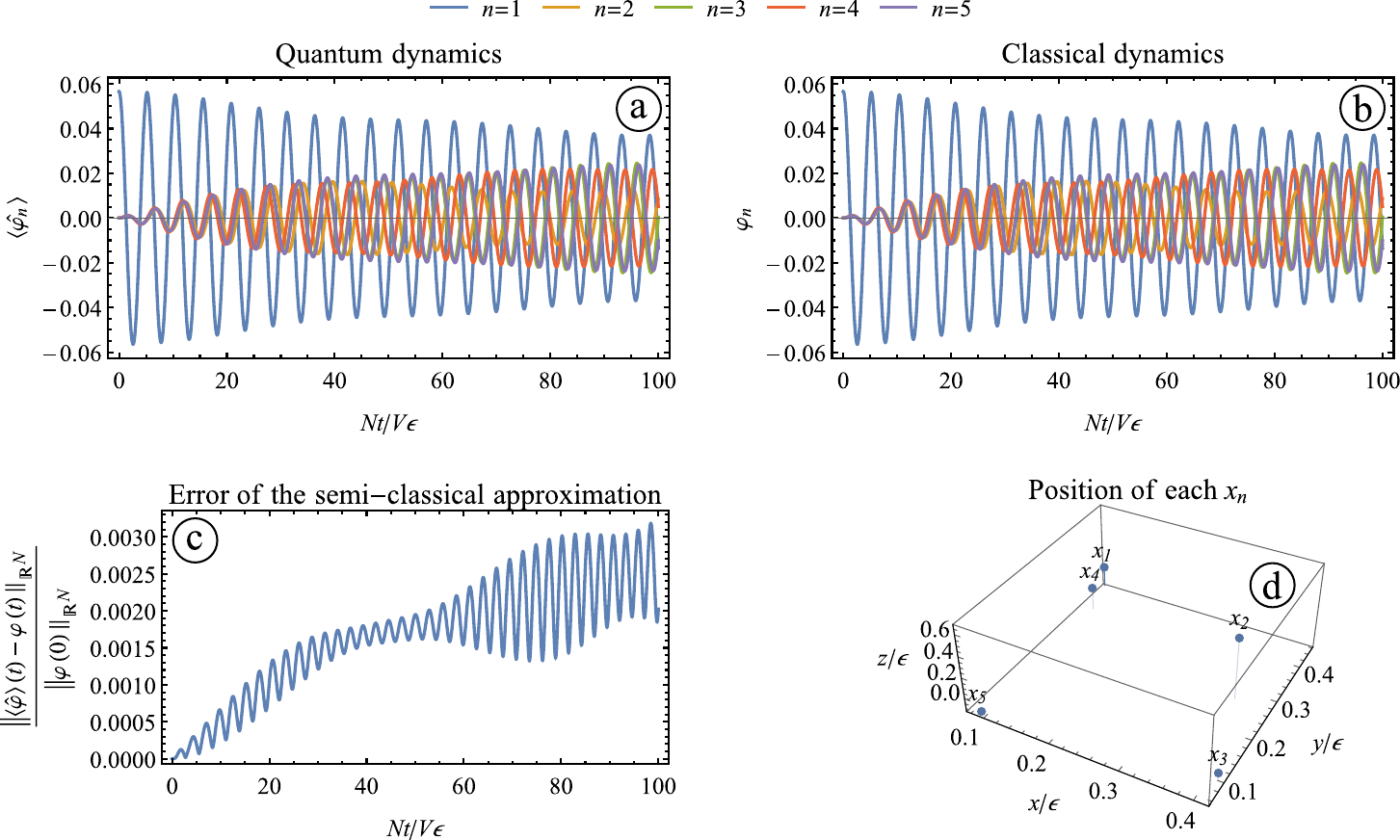}
\caption{Panels a) and b) show the time evolution of the field at different space positions, computed using the Schrödinger equation and the Hamiltonian Eq.~\eqref{eq:hamiltonian_quantum_final} for the quantum case, or the classical Hamiltonian and Hamilton equations for the classical case given in Appendix~\ref{sec:classical_hamiltonian}. The system is defined with $5$ points in Minkowski space whose locations are given in panel d). The initial quantum state condition is a coherent state $\ket{\Psi_{S.F.}}=\ket{0.1}\otimes \ket{0} \otimes \ket{0} \otimes \ket{0} \otimes \ket{0}$, and the initial classical state is $\phi_n =(\bra{0.1} \hat \phi \ket{0.1},0,0,0,0)$ and $\tilde \Pi_n = (0,0,0,0,0)$. The mass is fixed such that $m \epsilon=2$. Panel c) shows the error made by the semi-classical approximation. It consists of the 2-norm in $\setR^N$ of the difference between the vector $\langle \hat \phi_n \rangle$ and $\phi_n$. The accumulation of error is due to the small nonclassical effects, but also to some extent numerical errors due to the truncation of the Hilbert space (only the first four states of the Fock space are kept, leading to a total Hilbert space dimension of 1024). } 
\label{fig:quantum_vs_classical_scalar_field_v2}
\end{figure*}

The next step of the multi-scales analysis is to consider a regime where the gravitational field is in its semi-classical limit such that the dynamics of the scalar field are governed by the Schrödinger equation Eq.~\eqref{eq:effective_schordinger_scalar_field}. In this limit, the mean value of the Hamiltonian with respect to the gravitational state gives us the coefficients $p_{x_k}(x_n)$ in Eq.~\eqref{eq:hamiltonian_quantum_final}. We can thus forget the quantum nature of gravity and the coefficients $p_{x_k}(x_n)$ become inputs of the model. As previously outlined in the text, the Hamiltonian is in the form of a Bose-Hubbard Hamiltonian, and it is well defined and already studied in the literature. However, it is not so easy to recover standard quantum field theory, and even harder to recover the classical field. Of course, we can follow the reverse procedure of the one followed in Sec.~\ref{sec:quantum_scalar_field}, and come back to the initial point, but it is quite difficult to visualize the transition from the quantum world to the semi-classical one. To this end, different semi-classical approximations can be performed. Here, we are interested in an approach particularly well suited for numerical computations. First, we consider the regime where $\langle \hat \phi  \rangle \approx \phi$ and  $\langle \hat \Pi \rangle  \approx  \Pi$~\cite{gardiner_quantum_2004}, where the dynamics of $\phi$ and $\Pi$ are computed using Hamilton’s equations, with the classical Hamiltonian as a starting point (see appendix \ref{sec:classical_hamiltonian}). This allows us to replace the Schrödinger equation with a set of classical equations of motion, with a small number of degrees of freedom (compared to the size of the truncated Hilbert space used in a numerical simulation)~\cite{sachdev1999quantum,PhysRevA.76.032116,Veksler_2015}. Physically, the classical system corresponds to a set of coupled oscillators. Equations of motions are non-linear, but their numerical integration is not particularly difficult. Next, we take the limit when the number of discretization points tends to infinity, to recover the continuum limit. 

The regime with $\langle \hat \phi  \rangle \approx \phi$ and  $\langle \hat \Pi \rangle  \approx  \Pi$ is obtained when the quantum state is of the form
\begin{equation}
\ket{\Psi_{S.F.}} = \bigotimes_{n=1}^N \ket{a_n}
\end{equation}

with $\ket{a_n}$ a coherent state defined by~\cite{gardiner_quantum_2004}
\begin{equation}
\ket{a_n} = e^{-|a_n|^2/2} \sum_{k=0}^\infty \frac{(a_n)^k}{\sqrt{k!}}\ket{k_n},
\end{equation}
$\ket{k_n}$ being the Fock basis state of the oscillator $n$. A coherent state has the particularity that $\bra{a_n} \hat a_n \ket{a_n} = a_n$ and $\bra{a_n} \hat a_n^\dagger \ket{a_n} = a_n^*$, and thus $\langle \hat \phi  \rangle = \phi$ and  $\langle \hat \Pi \rangle =  \Pi$. The Hamiltonian has the particularity to preserve quite well the coherent state structure of the state, and thus, a coherent state remains coherent as a function of time. As a consequence, if $\ket{\Psi_{S.F.}}$ is a coherent state, it behaves like a classical ensemble of coupled oscillators. This point is illustrated in Fig.~\ref{fig:quantum_vs_classical_scalar_field_v2}. We see that quantum and classical dynamics are very close to each other, even after very long times (i.e., after a large number of oscillations). Physically, this system is quite far from a classical continuous scalar field because the number of points taken in this example is quite small. The number of quantum oscillators that can be considered in a numerical simulation is limited by the size of the Hilbert space. Nevertheless, a feature of the continuous limit can already be observed: an initially localized excitation will tend to propagate towards the nearby points. This is clearly visible in panels a) and b), where the field amplitude in $x_1$ decreases while the amplitude in the other points increases.

Next, the state of the scalar field is assumed to stay coherent at any time, such that dynamics can be computed using only the classical equation of motion (see appendix \ref{sec:classical_hamiltonian}). In this regime, the number of $x_n$ in the numerical simulation can be increased considerably, but the Monte Carlo method requires quite a lot of points to obtain a good convergence, and thus, the illustrative example is limited to a 1D flat space. In practice, the expression of the Hamiltonian remains the same as in 3D, but the coefficients $p_{x_n}(x_k)$ are now defined by. $p_{x_n}(x_k)=e^{-(x_n-x_k)^2/(2 \epsilon^2)}/(\sqrt{2\pi}\epsilon)$. In a flat 1D space, the Klein-Gordon equation is given by~\cite{itzykson_quantum_2012}
\begin{equation}
\frac{\partial^2 \phi}{\partial x^2} -\frac{\partial^2 \phi}{\partial t^2} - m^2 \phi = 0. 
\label{eq:Klein_Gordon_flat}
\end{equation}
It can be derived readily from the Lagrangian Eq.~\eqref{eq:Lagrangian_scalar_field}, with $g^{\mu\nu} = \eta^{\mu\nu}$ (the Minkowski metric), and using Euler Lagrange equations with respect to the field variable $\phi$. Solutions of Eq.~\eqref{eq:Klein_Gordon_flat} are given by~\cite{itzykson_quantum_2012}
\begin{equation}
\phi (x,t) = \int \frac{dk}{4\pi \omega(k)}\left( a(k)e^{\ii (-\omega(k) t + kx)} + C.C. \right),
\label{eq:solution_KG_equation}
\end{equation}
with $\omega(k) = \sqrt{k^2+m^2}$. In order to satisfy the condition that $\phi$ vanish on the boundary of the domain (remember that this is a necessary condition to derive the quantum Hamiltonian in the form of Eq.~\eqref{eq:hamiltonian_quantum_final}), the density of modes is chosen to be
\begin{equation}
a(k)= \frac{\omega(k)}{\kappa} \exp\left(-\frac{k^2}{2 \kappa^2}\right).
\label{eq:density_of_mode_scalar_field}
\end{equation}
Plugging Eq.~\eqref{eq:density_of_mode_scalar_field} into Eq.~\eqref{eq:solution_KG_equation} leads us to
\begin{equation}
\phi (x,t) = \int \frac{dk}{2\pi \kappa } \exp\left(-\frac{k^2}{2 \kappa^2}\right) \cos(kx - \omega(k) t).
\label{eq:analytic_expr_scalar_field}
\end{equation}
This integral can be easily evaluated numerically with very good precision. Even if a short numerical computation is necessary for its evaluation, Eq.~\eqref{eq:analytic_expr_scalar_field} is called below "the analytic solution" of the Klein-Gordon equation. In the limit when $m \rightarrow \infty$, a simple approximated expression can be derived, $\phi(x,t)\approx  e^{-x^2/2\kappa^2} \cos(m t)/\sqrt{2\pi}$. The shape of the wave packet is a Gaussian oscillating at a frequency $m$. 

\begin{figure*}[t]
\includegraphics[width=\textwidth]{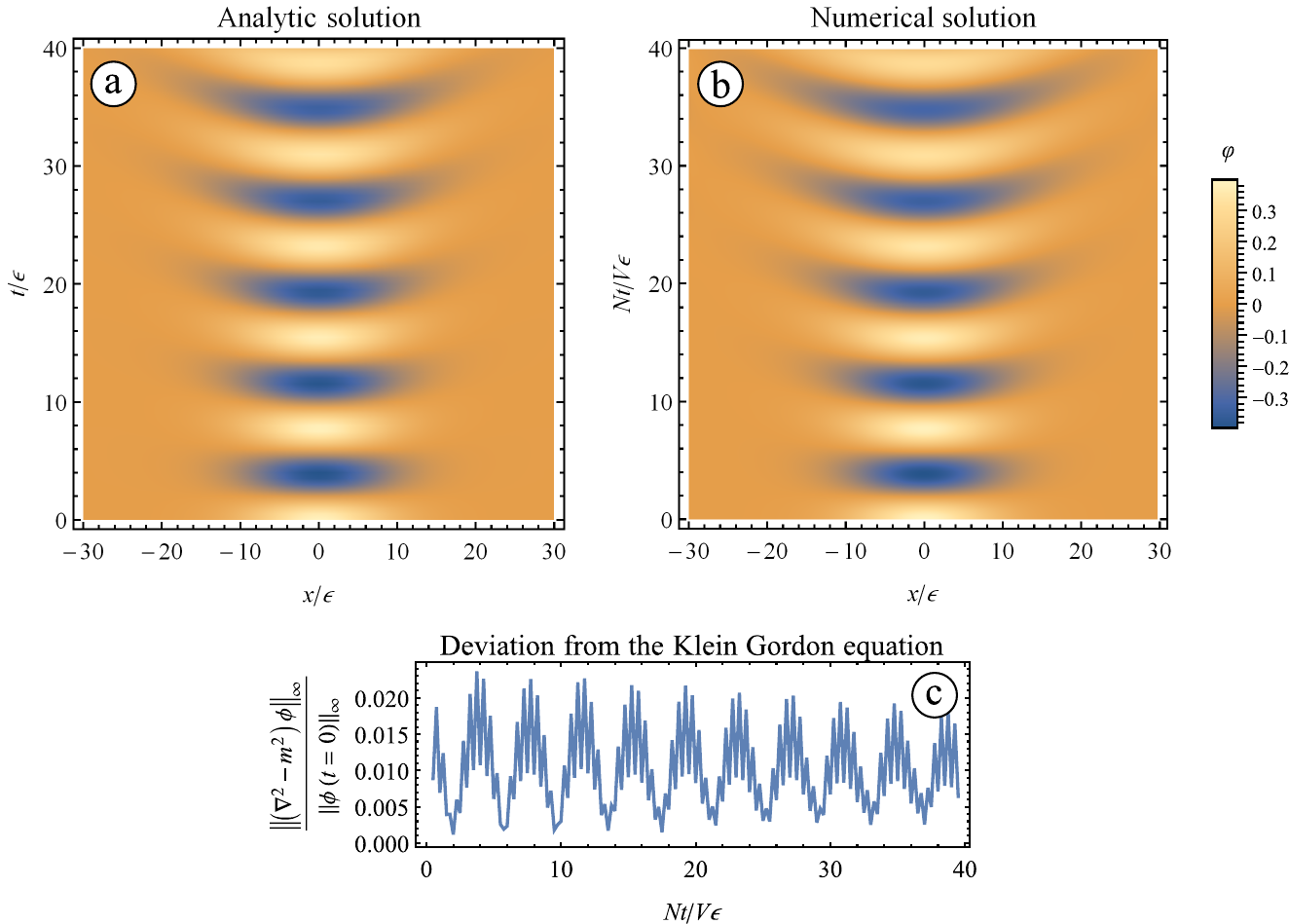}
\caption{Panel a) shows the analytic solution of the Klein Gordon equation, given by Eq.~\eqref{eq:analytic_expr_scalar_field} with $m \epsilon=0.8$ and $\kappa \epsilon = 0.15$. Panel b) shows the numerical solution, computed with the same parameters and an initial state given by the analytic solution at $t=0$. In the numerical solution, time must be rescaled due to the global factorization in the Hamiltonian Eq.~\eqref{eq:hamiltonian_quantum_final}. Panel c) gives the deviation (also called the error) of the numerical solution from the Klein-Gordon equation. It is given by the sup norm of $(\nabla^2 - m^2) \phi= \partial_x^2 \phi - \partial_t^2 \phi - m^2\phi$, the norm being computed over the variable $x$.  }
\label{fig:classical propagation}
\end{figure*}

In Fig.~\ref{fig:classical propagation}, the analytic solution of the Klein-Gordon equation is compared with the discrete model, defined by the classical version of Eq.~\eqref{eq:hamiltonian_quantum_final}. The equation of motion of the second model being entirely integrated numerically, the resulting solution is called below the "numerical solution". The two dynamics are very close to each other. In both cases, two observations can be made, the field oscillates at the frequency $m$ and there is a time deformation of the wave packet because $m \epsilon$ is of the order of unity. The validity of the numerical solution can be characterized more precisely by computing its deviation from the Klein-Gordon equation. The number of points $\phi(x,t)$ in the numerical solution being sufficiently high, it is possible to compute an analytic interpolation of the data points. The interpolated function is then inserted in the Klein-Gordon equation, and the error at a given time $t$ is defined by the sup norm, computed over the variable $x$. The error of the numerical solution is given in Fig.~\ref{fig:classical propagation} c). It is divided by the sup norm of $\phi(t=0)$ to fix an irrelevant scaling amplitude. The error remains small and bounded ($\lesssim$ 2 percent). Better accuracy of the method can be obtained by increasing the number of points, increasing the region of integration, and increasing the size of the wave packet (the size of the wave packet must be large to ensure a good convergence of the approximation scheme of the Laplacian, as given in Eq.~\eqref{eq:secon_order_approx_kinetic_energy_density}). 

To finalize the illustration of the method, a short example of propagation in a curved space-time is provided in Fig.~\ref{fig:Propagation_Hayward}. The chosen one is a Hayward space-time~\cite{PhysRevLett.96.031103} where space-slices are restricted to a single dimension. In Lemaitre type coordinates, the line element is given by:
\begin{equation}
ds^2 = -dt ^2 + \frac{r_s r^2}{r^3 + l^2 r_s} d\rho^2 
\end{equation}
with $t$ the time coordinate, and $\rho$ a space coordinate. $r_s$ correspond to the Schwarzschild radius, and $l$ is a length scale of the central body. The function $r(\rho,t)$ is a non-analytic, and defined through the differential equation $d\rho -dt = dr  \left(\tfrac{r_s r^2}{r^3 + l^2 r_s}\right)^{-1/2} $. It can be easily evaluated numerically or approximated analytically using a Taylor expansion. Since the coordinates are related by a differential equation, an arbitrary choice of origin is required. Here, $\rho-t=0$ is chosen to correspond with $r= 2^{1/3} l^{2/3} r_s^{1/3}$. In this case study, the space-time has two horizons~\cite{PhysRevLett.96.031103}. The wave packet is initially located near the inner horizon. Due to the non-zero energy density at the origin of this space-time, the wave packet propagates towards the horizon (free fall towards the center of the body, located at $r=0$), and a very strong distorsion happens near this latter. Here, the goal is just to illustrate that the method allows us modelize the non-trivial properties of a curved space-time\footnote{The second order expansion of $\sigma$ used for its estimation allows us to show the existance of curvature effects, but it may also leads to numerical errors that could be corrected with more accurate estimations of $\sigma$.}. No further details are investigated, this could be the topic of another independent paper.

The very good results given by the numerical solution show the relevance of the discretization method. It is, therefore, an interesting for quantum gravity purposes, but it can also be an alternative discretization scheme in quantum field theory in curved space-time for which most of the current techniques are based on a simplicial discretization of the space-time manifold~\cite{christ_weights_1982,ren_matter_1988,teixeira_random_2013,
 yamamoto_lattice_2014,Brower:2016moq}. The simplicial discretization requires a nontrivial discretization scheme of the differential operators which may be difficult to compute in large dimensions. With the Monte-Carlo based approach the nontrivial task is to compute the geodesic length, but once this task is achieved, potentially with some approximations, then, the implementation of the scheme is straightforward.

\begin{figure}[h]
\includegraphics[width=0.5\textwidth ]{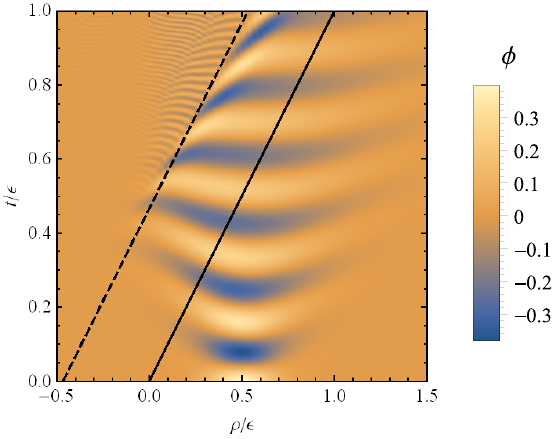}
\caption{Propagation of a scalar field near the inner horizon of a Hayward space-time. The dashed line corresponds to this horizon, and the solid line corresponds to the radial coordinate $r=(2 rs)^{1/3} l^{2/3}$. The parameters used in the simulation are $l/\epsilon = 10$, $r_s/\epsilon = 60$, $m\epsilon = 0.8$, and $\kappa \epsilon = 0.4$. }
\label{fig:Propagation_Hayward}
\end{figure}

\section{Conclusion}

\label{sec:conclusion}

In this paper, several ideas of quantum gravity have been explored, such as the possibility to encode space-time geometric distances in the correlation of a spin-like system, and to use a discretization scheme based on Monte-Carlo principles instead of a lattice-based discretization. These two ingredients offer interesting advantages compared to other common quantum gravity theories which are based on elementary areas (instead of distances) and/or simplicial lattices for the discretization scheme. The use of distances leads to a simpler connection with general relativity and our everyday world, for which elementary quantities are distances. The Monte Carlo method simplifies the discretization scheme of differential operators, and it avoids conceptual problems induced by the arbitrary choice of a lattice.

To construct a physically relevant theory, the starting point was to quantize a scalar field in curved space-time. Working with such a test system has led to a simple link between general ideas and a well-defined mathematical model. In this model, the gravitational field is described by a quantum system which is similar to an ensemble of spin-$\tfrac{1}{2}$ particles. At each spin corresponds a point of space. The geodesic length between two of these points is given by the quantum state through the relation $C_{nk}^2=\langle  \hat \sigma_+(x_n) \hat \sigma_-(x_k) +\hat \sigma_+(x_k) \rangle _{\Psi_G}^2 \propto p_{x_n}(x_k)$. This relation is a postulate of the theory, and its physical relevance has been investigated with numerical tests. Quantum states that describe various classical geometries have been found with a very high numerical precision. This is a first positive result supporting this very strong postulate. The numerical simulations have also shown the existence of two different regimes associated with two different length scales. Bellow a specific threshold, the classical description does not hold anymore and one obtains something that can be called "a quantum space-time". This is a nontrivial consequence of the model, which remotely resembles the quantization of areas and volumes in loop quantum gravity. In loop quantum gravity, the minimal scale is given by the lowest eigenvalues of area and volume operators~\cite{rovelli_quantum_2004,rovelli_covariant_2014,ashtekar2021short}. This can be interpreted as the existence of minimal building blocks that can be used to construct the space-time. However, in the theory of this paper, the interpretation is slightly different. The classical space-time is observed only if the distance between two points is sufficiently far from each other. If the density of points becomes too short, there is a threshold where a classical picture is not possible anymore and the resulting object cannot be interpreted as a space-like sub-manifold. This effect may have interesting consequences in black hole physics and cosmology. The relevance of the model has also been investigated from another point of view, which has consisted of recovering, by means of successive approximation, the classical theory of a scalar field. Each step of the procedure has been illustrated by numerical examples. These examples show clearly the different regimes of the theory, but they also demonstrate its computational efficiency.

In these few pages, many issues have been glossed over. Here is a (non-exhaustive) list of the points that shall be investigated:

\begin{itemize}
\item One of the most important ones concerns the explicit construction of the gravitational Hamiltonian $H_G$, which is compatible with the Einstein-field equation (at least in some limit cases, compatible with current experimental observations). Another issue concerns the choice of the probability density $p_{x_n}(x_k)$, which has been assumed to be a normal law. In fact, the form of the density may be fixed by the theory itself, by $\hat H_G$ and its eigenvectors. The use of a normal Law is interesting because at small distances we have $C_{nk}^2 \propto 1 - q_{a b} dx^a dx^b/(2 \epsilon ^2)$, with $dx$ a vector connecting $x_n$ to $x_k$. The relation with the metric tensor is therefore very simple and quite natural, but many different distributions can have locally this form and very different behaviors at large scales.
\item In this paper, it was sufficient to provide a quantum mechanical description of the 3-metric $q_{ab}$. However, this is not sufficient to describe the entire space-time manifold $\Mc M$, the extrinsic curvature $k_{ab}$ is also necessary, and it must be encoded somewhere in $\ket{\tilde \Psi_G}$.
\item  A similar study with a Dirac field instead of a scalar field could also be interesting to test the coupling operator $\hat C_{nk}$. This operator (see Eq.~\eqref{eq:def_operator_Cnk}) has been chosen because it provides interesting numerical results, but this choice is not unique. For a Dirac field, it could be more interesting to choose an operator of the form $\hat C_{nk} = \sum_{i,j \in\{x,y,z\}} w_{ij} \hat \sigma_i(x_n) \hat \sigma(x_k)$, with $w_{ij} \in \setR$, with $\hat \sigma_i(x_n)$ the Pauli matrices.
\item  Another point to clarify concerns the possibility of interpreting $\hat \sigma_i(x_n)$ as a quantized 3D reference frame. If such a link can be consistently made with the ideas of the paper, this may provide a very interesting connection with current theories of quantum reference frame.
\item Finally, a last point to investigate with deeper details, together with the construction of $H_G$, is the physical consequence of the use of $\ket{\tilde \Psi_G}= \ket{ \Psi_G}^{\otimes 2}$. This condition has a pure mathematical origin, it could be relaxed, such that $\ket{\tilde \Psi_G}= \ket{ \Psi_1} \otimes \ket{\Psi_2}$. The two states would correspond only in the classical limit.
\end{itemize}  

\section*{Acknowledgement}

The author acknowledges David Viennot for useful discussions on emergent gravity.

\section*{Data Availability}

Numerical codes and data can be provided upon reasonable request.

\appendix

\section{Normal law in a curved space-time}
\label{sec:Normal law in a curved space-time}

\begin{theorem}
Let Synge's world function be expressed in terms of Riemann normal coordinates~\cite{poisson_motion_2011,Brewin_2009}:
\[
 \sigma(x_n,x_k) = \frac{1}{2}\left(\delta_{ab} y^a y^b - \frac{\lambda^2}{3} R_{acbd} y^a y^b y^c y^d   +O(\lambda^3) \right)
\]
with the coordinate system defined such that $x_n$ is located at the origin, and $x_k$ located at $y^a$. $\lambda$ is a scale parameter that must be sufficiently small, but also sufficiently large in front of $\epsilon$, and $R_{abcd}$ is Riemann tensor evaluated at $y=0$. Then, we have,
\[
\int d^3y \sqrt{q}~\frac{\Mc N(R)}{(\sqrt{2 \pi} \epsilon)^3} e^{-\sigma/\epsilon^2} = 1 + O(\lambda^3)
\]
with $\Mc N(R) = 1+ R \epsilon^2 \lambda^2 /6 + O(\lambda^3) $, $R$ being the scalar curvature, and
\[
\int d^3y \sqrt{q}~y^a y^b \frac{\Mc N(R)}{(\sqrt{2 \pi} \epsilon)^3} e^{-\sigma/\epsilon^2} = \epsilon^2 + O(\epsilon^3) + O(\lambda^3).
\]
\end{theorem}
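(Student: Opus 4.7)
The plan is to work entirely in the Riemann normal coordinates centered at $x_n$ that are invoked in the hypothesis and to reduce both integrals to standard Gaussian moments of the reference measure $d\mu \equiv (\sqrt{2\pi}\epsilon)^{-3} e^{-\delta_{ab}y^a y^b/(2\epsilon^2)} d^3 y$. Three ingredients are needed: the given expansion of Synge's function; the standard Riemann normal coordinate expansion of the volume element,
\[
\sqrt{q(y)} = 1 - \tfrac{\lambda^2}{6} R_{ab}\, y^a y^b + O(\lambda^3),
\]
with $R_{ab}$ evaluated at the origin; and the Wick formulas $\int d\mu = 1$, $\int y^a y^b\, d\mu = \epsilon^2 \delta^{ab}$, and $\int y^a y^b y^c y^d\, d\mu = \epsilon^4 (\delta^{ab}\delta^{cd} + \delta^{ac}\delta^{bd} + \delta^{ad}\delta^{bc})$.

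First I would expand the exponential in the curvature correction,
\[
e^{-\sigma/\epsilon^2} = e^{-|y|^2/(2\epsilon^2)}\Bigl[1 + \tfrac{\lambda^2}{3\epsilon^2} R_{acbd}\, y^a y^b y^c y^d + O(\lambda^3)\Bigr],
\]
and multiply by $\sqrt{q}$ to obtain, at order $\lambda^2$, the integrand
\[
e^{-|y|^2/(2\epsilon^2)}\Bigl[1 - \tfrac{\lambda^2}{6} R_{ab} y^a y^b + \tfrac{\lambda^2}{3\epsilon^2} R_{acbd} y^a y^b y^c y^d\Bigr].
\]
Applying the Gaussian moments term by term produces $-\tfrac{\lambda^2 \epsilon^2}{6}R$ from the Ricci piece and $\tfrac{\lambda^2\epsilon^2}{3} R_{acbd}(\delta^{ab}\delta^{cd} + \delta^{ac}\delta^{bd} + \delta^{ad}\delta^{bc})$ from the Riemann piece.

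The key algebraic step is to show that this Riemann contraction vanishes identically. The first pairing equals $R_{cd}\delta^{cd} = R$; the second is zero by antisymmetry $R_{aabd} = 0$; and the third equals $\sum_{a,b} R_{abba} = -\sum_{a,b} R_{abab} = -R$, where the last identity follows from the Ricci trace $R_{bb}=\sum_a R_{abab}$. These three contributions sum to zero, leaving $1 - \tfrac{\lambda^2 \epsilon^2 R}{6}$ before normalization. Setting $\Mc N(R) = 1 + \tfrac{\lambda^2 \epsilon^2 R}{6} + O(\lambda^3)$ compensates exactly and yields the first identity.

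For the second identity I repeat the same expansion with the extra factor $y^a y^b$ in the integrand. Reading the statement's ``$\epsilon^2$'' as the tensorial shorthand $\epsilon^2 \delta^{ab}$, the leading Gaussian moment gives exactly this; the $\lambda^2$-order curvature terms generate moments of $y^4$ and $y^6$ whose contractions against $R_{ab}$ and $R_{acbd}$ are of relative size $\lambda^2 \epsilon^2$, and multiplying by $\Mc N(R)$ introduces only corrections of the same order, which are absorbed into the stated $O(\epsilon^3) + O(\lambda^3)$ error. The genuinely delicate step is the Riemann-tensor contraction bookkeeping in the first identity: it requires careful use of all three algebraic symmetries of $R_{abcd}$ together with the Ricci trace identity, and a sign slip there would immediately corrupt the prefactor $\Mc N(R)$ needed to make the normalization work.
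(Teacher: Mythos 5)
Your proof follows essentially the same route as the paper's: expand $\sigma$, $\sqrt{q}$ and the exponential to order $\lambda^2$, reduce everything to Gaussian moments, and use the algebraic symmetries of $R_{abcd}$ to kill the quartic contraction; your explicit three-pairing computation showing $R_{acbd}\left(\delta^{ab}\delta^{cd}+\delta^{ac}\delta^{bd}+\delta^{ad}\delta^{bc}\right)=0$ is in fact more careful than the paper's one-line appeal to skew-symmetry, since only one of the three pairings vanishes by antisymmetry while the other two cancel against each other. One inconsequential slip: because of the overall factor $\tfrac{1}{2}$ in Synge's world function, the exponential correction is $\tfrac{\lambda^2}{6\epsilon^2}R_{acbd}\,y^a y^b y^c y^d$, not $\tfrac{\lambda^2}{3\epsilon^2}$ --- harmless here only because that term's Gaussian average vanishes identically, so neither $\mathcal{N}(R)$ nor the final result is affected.
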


\begin{proof}
The proof is based in both cases on a series expansion in powers of $\lambda$, followed by the integration. To perform the calculation $\sqrt{q}$ must be also Taylor expanded. Following anew~\cite{Brewin_2009}, one get
\begin{equation}
\sqrt{q} = 1 - \frac{\lambda ^2}{6} R_{ab}y^a y^b + O(\lambda^3),
\end{equation}
with $R_{ab}$ the Ricci tensor. Plugging all together the expansions of $\Mc N(R), \sigma$ and $\sqrt{q}$, the integral becomes:
\begin{equation*}
\begin{split}
&\int d^3y \sqrt{q}~\frac{\Mc N(R)}{(\sqrt{2 \pi} \epsilon)^3} e^{-\sigma/\epsilon^2} \\
&= \int d^3y \frac{ e^{-\delta_{ab}y^a y^b/\epsilon^2}}{(\sqrt{2 \pi} \epsilon)^3} \left( 1 + \frac{\lambda^2}{6}(R \epsilon^2 - R_{ab} y^a y^b  \right. \\
& ~~ \left. + \frac{R_{abcd}}{\epsilon^2} y^a y^b y^c y^d )\right) + O(\lambda^3)
\end{split}
\label{eq:taylor_expansion_normal_coordinate_integral_pnk}
\end{equation*}
The integral has non zero contributions only if the powers in $y^a$ are all even (otherwise positive and negative positions cancel each other). This leads us to the following replacement rules:
\begin{align}
\int d^3y \frac{ e^{-\delta_{ab}y^a y^b/\epsilon^2}}{(\sqrt{2 \pi} \epsilon)^3} y^a y^b &= \epsilon^2 \delta^{a b} \\
\int d^3y \frac{ e^{-\delta_{ab}y^a y^b/\epsilon^2}}{(\sqrt{2 \pi} \epsilon)^3} (y^a)^2 (y^b)^2 &= \epsilon^4(3-2 \delta^{a b}).
\end{align}
Moreover, the Ricci tensor is symmetric and the Riemann tensor is skew-symmetric on some of its indices. As a consequence, one get $R_{ab} y^a y^b \rightarrow R_{ab}\delta^{ab} =R$ (the scalar curvature being evaluated at $y^a =0$), and $R_{abcd} y^a y^b y^c y^d \rightarrow 0$. As a result, the second order term in $\lambda $ in \eqref{eq:taylor_expansion_normal_coordinate_integral_pnk} is canceled, and the result is $1+O(\lambda^3)$.

For the second relation of the theorem, the starting point is the same, but now, the minimum number of powers in $y^a$ is at least one. Once again, the term with the Riemann tensor is canceled, and the only surviving terms are the ones with $y^ay^b$, $R \epsilon^2 \lambda ^2 y^a y^b/6 $ , and $R_{cb} \lambda ^2 y^a y^b y^c y^d/6$. The result of the integral, is thus $\epsilon^2 - \epsilon ^4 \lambda^2 ( R + 2 R^{ab})/6$. Then, this leads to a final expression of $\epsilon^2 + O(\epsilon^3) + O(\lambda^3)$.

\end{proof}

\section{Details on the numerical optimizations}
\label{sec:Details on the numerical optimizations}

In Sec.~\ref{sec:Quantum space-time}, the assumptions at the core of the quantum description of space-time is tested numerically by searching numerically quantum states that can describe a classical space-time.

As earlier commented in the main text, the goal is to determine the vector component of a quantum state that minimize a cost function $\Mc F$. Several cases study are considered, and results are plotted in Fig.~\ref{fig:convergence_approximation_scheme}.  In the first one (panel a), the locations of $N$ points are generated randomly in a cube of edge length $L/\epsilon$. In the second case (panel b), the number of points is fixed, and they are located at the vertices of a regular polyhedron (a cube or a 3-simplex), the edge length being varied. In the third case (panel c), the situation is similar to the first one, but the cube is replaced by a 3-sphere of radius $r$. In the last case (panel d), points a randomly sampled in a Hayward space-time~\cite{PhysRevLett.96.031103}. With these examples, flat and two kind of curved spaces are investigated. The first curved space of interest is a 3-sphere, for which it is easy to compute geodesic lengths. The second curved space has, contrary to the 3-sphere, a position dependent scalar curvature. Hence, the normalization coefficient of the probability distribution (see Eq.~\eqref{eq:assumption_probability_density}) depends on the position. 

The Hayward\footnote{A Hayward space-time describes a Black hole without singularity.} space-time is defined by the line element (written in Lemaitre-type coordinates)
\begin{equation}
ds^2 = -dt ^2 + \frac{r_s r^2}{r^3 + l^2 r_s} d\rho^2 + r^2 d\Omega^2,
\label{eq:modified_Hayward_metric}
\end{equation}
with $t$ the time coordinate, $\rho$ the radial coordinate, and $\Omega$ is a solid angle coordinate, $r_s$ correspond to the Schwarzschild radius, and $l$ is a length scale of the central body. $r(\rho,t)$ is a non-analytic function of $\rho$ and $t$, defined by $d\rho -dt = dr\left(\tfrac{r_s r^2}{r^3 + l^2 r_s}\right)^{-1/2}$. It can be easily evaluated numerically or approximated analytically using a Taylor expansion. Since the coordinates are related by a differential equation, an arbitrary choice of origin is required. Here, $\rho-t=0$ is chosen to correspond with $r= 2^{1/3} l^{2/3} r_s^{1/3}$. In this paper only a limited portion of the space-time is considered. In the case of Fig.~\ref{fig:convergence_approximation_scheme}, $t=0$ and $\rho/\epsilon\in[-1,2]$, $l/\epsilon = 0.5$, and $r_s/\epsilon = 3$.

To perform the calculations, several key points must be considered. First of all, the minimization of $\Mc F$ is performed with a two-step procedure. The first step is a global minimization with the algorithm JAYA~\cite{venkata_rao_jaya:_2016}. With this algorithm, convergence towards a minimum may not be achieved (mostly when the number of spins is large), but the resulting quantum state is used as an input for a second optimization using a gradient descent algorithm~\cite{bonnans2006numerical}.

The second point is that $C_{nk}^2 \propto p_{x_n}(x_k)$, and the proportionality coefficient is unknown. The value of this coefficient, namely $\alpha$, is fixed by hand in the numerical optimization. In principle it can be optimized simultaneously with the quantum state, but at the price of an important computational cost. A third important point concerns the computation of geodesic lengths. This can be done exactly in the simplest cases, but otherwise they are only estimated approximately.

\begin{itemize}
\item In the first two cases $\alpha$ is simply fixed to $\alpha =1$, and the geodesic length is given by the Euclidean distance. We thus have $\sigma = \Vert x_n -x_k \Vert^2/2$.

\item In the case of the 3-sphere, the space is curved, $\alpha = (1 + R \epsilon^2 \lambda^2/6)(\sqrt{2 \pi} \epsilon)^{-3}$, with  $\lambda = 3 r /5$ (fixed by hand, such that $\epsilon < \lambda < r$). Recall that $r$ is the radius of the sphere, and the scalar curvature is given by $R = 6/r^2$. The geodesic length is given by the arc length, so that $\sigma = r^2 \arccos (u_n.u_k/r^2)^2/2$, with $u_n$ the point position in $\setR^4$.

\item In the case of the Hayward space, $\alpha = (\sqrt{2 \pi} \epsilon)^{-3}$. The geodesic length is estimated using a second order expansion of $\sigma$~\cite{poisson_motion_2011,Brewin_2009}
\end{itemize}

\section{Covariant Monte-Carlo discretization}
\label{sec:covariant_theory}

In this appendix, the ideas introduced in Sec.~\ref{sec:discretization_scalar_field_Hamiltonian} are reused to define a covariant quantum theory of the scalar field. In such a framework, the action is discretized, and it can be inserted in a second step into a path-integral. The precise definition of the path integral is not discussed here, the following few lines focus on the discretization of the action.
The usual action for a scalar field is given by:
\begin{equation}
S = \int d^4x  \frac{\sqrt{-g}}{2} \left[ g^{\mu\nu} (\partial_\mu \phi)( \partial_\nu \phi) + m^2) \phi^2 \right]
\end{equation}
It can be discretized with a Monte-Carlo approach, as given in Sec.~\ref{sec:Monte_Carlo_approx}, but using the space-time metric $g$ instead of the metric $q$ of a 3D hypersurface. The nontrivial point would be to estimate the term $g^{\mu\nu} (\partial_\mu \phi)( \partial_\nu \phi)$.

One can notice that:
\begin{equation}
\begin{split}
\mathbb{E}[\phi^2]_x & = \int d^4y \sqrt{-g}~ p_x(y) \phi(x)^2\\
& \approx \phi(x)^2 + \frac{\epsilon^2}{2} \phi(x) \nabla _\mu \nabla^\mu  \phi(x) + \epsilon^2 (\partial^\mu \phi)( \partial_\mu \phi)
\end{split}
\label{eq:taylor exp_kinetic_term_lagrangian}
\end{equation}
The computation steps are basically the same as in Eq.~\eqref{eq:taylor_expand_mean_phi_square}, but using the Taylor expansion of $\phi(x)^2$ over space-time. Using the relation of Eq.~\eqref{eq:secon_order_approx_kinetic_energy_density}, generalized to 4D, it is possible to rewrite Eq.~\eqref{eq:taylor exp_kinetic_term_lagrangian} into
\begin{equation}
\mathbb{E}[\phi^2]_x \approx \epsilon^2 \left( \frac{1}{2}\phi(x) \mathbb{E}[\phi]_x +  (\partial^\mu \phi)( \partial_\mu \phi) \right),
\end{equation}
and thus,
\begin{equation}
(\partial^\mu \phi)( \partial_\mu \phi) \approx \frac{1}{\epsilon ^2} \mathbb{E}[\phi^2]_x - \frac{1}{2} \phi(x)\mathbb{E}[\phi]_x.
\end{equation}
As a consequence, with the Monte-Carlo approximation, the action is approximately
\begin{equation}
S \approx \frac{V}{2N} \sum_{n=1}^N  m^2\phi_n^2 + \frac{V}{N} \sum_{k=1}^N p_{x_n}(x_k)\left( \frac{1}{\epsilon^2} \phi_k^2 - \frac{1}{2}\phi_n \phi_k \right)
\end{equation}
One can also factorize by $V/N$, as in Eq.~\eqref{eq:hamiltonian_quantum_final}, and the final result is
\begin{equation}
\begin{split}
S \approx& \frac{V^2}{2N^2} \sum_{n=1}^N  \left( \sum_{k=1}^N p_{x_n}(x_k)\right)m^2\phi_n^2  \\
&+ \sum_{k=1}^N p_{x_n}(x_k)\left( \frac{1}{\epsilon^2} \phi_k^2 - \frac{1}{2}\phi_n \phi_k
\right).
 \end{split}
\end{equation}
Once again, this formula is evaluated in 4D, which means that $p_{x_n}(x_k)$ is a function of the geodesic length in $\Mc M$, not in $\Sigma$.

\section{Hamiltonian and equation of motion of discretized systems}
\label{sec:classical_hamiltonian}

In the main text of this paper, repeated references to the classical Hamiltonian and the classical equation of motion are made, without detailed explanations on the subject. A few details on the subject are reported in this appendix. 

The classical Hamiltonian and the classical equation of motions are already given in Sec.~\ref{sec:quantum_scalar_field}, in the case of the continuous field. However, the formula must be slightly adapted in the discretized versions.

In the continuous case, Hamilton’s equations are given by functional derivatives~\cite{gelfand_calculus_2000}. In the case of $\delta$-peacked variation of a field, a functional $I = \int d\mu(x) f(\phi(x))$ has for variation:
\begin{equation}
\frac{\delta I}{\delta \phi(y)} = \int d\mu(x) \frac{\partial f}{\partial \phi} \delta(x-y) = \frac{\partial f}{\partial \phi}(y).
\end{equation}
In the Monte-Carlo approach, the Lebesgue measure \footnote{for simplicity, a flat space is assumed} $\mu$ is replaced by a sum of Dirac distribution, i.e.
\begin{equation}
I \approx I' = \frac{V}{N} \sum_n \int d\mu(x) \delta(x-x_n) f(\phi(x)).
\end{equation}
The functional derivative is, therefore, \footnote{to get this result, a product of Dirac distributions must be carefully computed using the Colombeau algebra~\cite{colombeau_elementary_2011}.}
\begin{equation}
\frac{\delta I'}{\delta \phi(y)} = \frac{V}{N} \sum_n \frac{\partial f}{\partial \phi}(x_n) \delta(x_n -y).
\end{equation}
This is a distribution. To identify the variations of $I'$ to the variations of $I$, $\tfrac{\delta I'}{\delta \phi(y)}$ must be regularized. The regularization can be performed by different means. A simple one is to consider:
\begin{equation}
\frac{\delta I}{\delta \phi(y)}  = \frac{N}{V} \int_{\Mc B_y} d\mu(x) \frac{\delta I'}{\delta \phi(y)} 
\end{equation}
with $\Mc B_y$ a ball centered on $y$ with a radius sufficiently small, so that a single Dirac distribution is integrated. The two functional derivatives are thus proportional, with a factor $N/V$.

As a consequence, the proportionality coefficient is recovered in all the formulas adapted from the continuous case. Then, Hamilton’s equations for the discretized system are given by:
\begin{align}
\frac{d \phi_n}{dt} =  \frac{N}{V}\frac{\partial H}{\partial \Pi_n} \\
\frac{d \Pi_n}{dt} = - \frac{N}{V} \frac{\partial H}{\partial \phi_n} .
\end{align}
with $H$ the discretized Hamiltonian, and the Schrödinger equation is
\begin{equation}
\frac{d}{dt} \ket{\psi}= -\ii  \frac{N}{V} \hat H \ket{\psi}.
\end{equation}

Now, a few details on the classical version of Eq.~\eqref{eq:hamiltonian_quantum_final} are given.
The classical Hamiltonian is obtained by reversing the canonical quantization procedure, which means that $\hat a_n$ and $\hat a_n^\dagger$ are replaced by $\setC$ numbers $a_n$ and $a_n^\dagger$. The corresponding Hamiltonian $H$ is therefore a function on the complex phase space $\setC^N$. Equations of motion are then obtained by complexified Hamilton equations $ \ii d_t a_n =  \tfrac{N}{V}\partial_{a_n^\dagger} H$. We can also come back to the real representation of the system, using the change of variable introduced in Eq.~\eqref{eq:def_a} and \eqref{eq:def_a_dagger}. After the change of variable, the classical Hamiltonian reads:
\begin{equation}
\begin{split}
H =&\frac{V^2 }{2N^2} \sum_{n=1}^N  \left( \sum_{k=1}^n p_{x_n}(x_k)\right)\left(  \Pi_n^2 + \left[m^2 +\frac{1}{\epsilon^2} \right] \phi_n^2   \right) \\
& - \frac{1}{ \epsilon^2  } \left( \phi_n \sum_{k=1}^N p_{x_n}(x_k )\phi_k \right).
\end{split}
\end{equation}

%


\begin{thebibliography}{10}

\bibitem{klammer2008fermions}
Daniela Klammer and Harold Steinacker.
\newblock ``Fermions and emergent noncommutative gravity''.
\newblock
  \href{https://dx.doi.org/https://doi.org/10.1088/1126-6708/2008/08/074}{Journal
  of High Energy Physics {\bf 2008}, 074}~(2008).

\bibitem{steinacker2010emergent}
Harold Steinacker.
\newblock ``Emergent geometry and gravity from matrix models: an
  introduction''.
\newblock
  \href{https://dx.doi.org/https://doi.org/10.1088/0264-9381/27/13/133001}{Classical
  and Quantum Gravity {\bf 27}, 133001}~(2010).

\bibitem{Viennot_2021}
David Viennot.
\newblock ``Emergent gravity and d-brane adiabatic dynamics: emergent lorentz
  connection''.
\newblock \href{https://dx.doi.org/10.1088/1361-6382/ac337d}{Classical and
  Quantum Gravity {\bf 38}, 245004}~(2021).

\bibitem{viennot_fuzzy_2022}
David Viennot.
\newblock ``Fuzzy schwarzschild (2 + 1)-spacetime''.
\newblock \href{https://dx.doi.org/10.1063/5.0091364}{Journal of Mathematical
  Physics {\bf 63}, 082302}~(2022).

\bibitem{konopka_quantum_2008}
Tomasz Konopka, Fotini Markopoulou, and Simone Severini.
\newblock ``Quantum graphity: {A} model of emergent locality''.
\newblock \href{https://dx.doi.org/10.1103/PhysRevD.77.104029}{Phys. Rev. D
  {\bf 77}, 104029}~(2008).

\bibitem{caravelli_properties_2011}
Francesco Caravelli and Fotini Markopoulou.
\newblock ``Properties of quantum graphity at low temperature''.
\newblock \href{https://dx.doi.org/10.1103/PhysRevD.84.024002}{Phys. Rev. D
  {\bf 84}, 024002}~(2011).

\bibitem{quach_domain_2012}
James~Q. Quach, Chun-Hsu Su, Andrew~M. Martin, and Andrew~D. Greentree.
\newblock ``Domain structures in quantum graphity''.
\newblock \href{https://dx.doi.org/10.1103/PhysRevD.86.044001}{Phys. Rev. D
  {\bf 86}, 044001}~(2012).

\bibitem{cao_space_2017}
ChunJun Cao, Sean~M. Carroll, and Spyridon Michalakis.
\newblock ``Space from {Hilbert} space: {Recovering} geometry from bulk
  entanglement''.
\newblock \href{https://dx.doi.org/10.1103/PhysRevD.95.024031}{Phys. Rev. D
  {\bf 95}, 024031}~(2017).

\bibitem{bao_hilbert_2017}
Ning Bao, Sean~M. Carroll, and Ashmeet Singh.
\newblock ``The {Hilbert} space of quantum gravity is locally
  finite-dimensional''.
\newblock \href{https://dx.doi.org/10.1142/S0218271817430131}{Int. J. Mod.
  Phys. D {\bf 26}, 1743013}~(2017).

\bibitem{PhysRevD.97.086003}
ChunJun Cao and Sean~M. Carroll.
\newblock ``Bulk entanglement gravity without a boundary: Towards finding
  einstein's equation in hilbert space''.
\newblock \href{https://dx.doi.org/10.1103/PhysRevD.97.086003}{Phys. Rev. D
  {\bf 97}, 086003}~(2018).

\bibitem{padmanabhan2015emergent}
Thinakkal Padmanabhan.
\newblock ``Emergent gravity paradigm: recent progress''.
\newblock
  \href{https://dx.doi.org/https://doi.org/10.1142/S0217732315400076}{Modern
  Physics Letters A {\bf 30}, 1540007}~(2015).

\bibitem{andrea_mondino_optimal_2022}
Andrea Mondino and Stefan Suhr.
\newblock ``An optimal transport formulation of the einstein equations of
  general relativity''.
\newblock \href{https://dx.doi.org/https://doi.org/10.4171/jems/1188}{Journal
  of the European Mathematical Society {\bf 25}, 933--994}~(2022).

\bibitem{gorard_quantum_2020}
Jonathan Gorard.
\newblock ``Some quantum mechanical properties of the wolfram model''.
\newblock
  \href{https://dx.doi.org/http://dx.doi.org/10.25088/ComplexSystems.29.2.537}{Complex
  Syst.{\bf 29}}~(2020).

\bibitem{gorard_relativistic_2020}
Jonathan Gorard.
\newblock ``Some relativistic and gravitational properties of the wolfram
  model''.
\newblock \href{https://dx.doi.org/10.25088/ComplexSystems.29.2.599}{Complex
  Systems {\bf 29}, 599--654}~(2020).

\bibitem{universe5010035}
Lisa Glaser and Sebastian Steinhaus.
\newblock ``Quantum gravity on the computer: Impressions of a workshop''.
\newblock \href{https://dx.doi.org/10.3390/universe5010035}{Universe{\bf
  5}}~(2019).

\bibitem{ashtekar2021short}
Abhay Ashtekar and Eugenio Bianchi.
\newblock ``A short review of loop quantum gravity''.
\newblock
  \href{https://dx.doi.org/https://doi.org/10.1088/1361-6633/abed91}{Reports on
  Progress in Physics {\bf 84}, 042001}~(2021).

\bibitem{yosifov2021aspects}
Alexander~Y Yosifov.
\newblock ``Aspects of semiclassical black holes: Development and open
  problems''.
\newblock
  \href{https://dx.doi.org/https://doi.org/10.1155/2021/6628693}{Advances in
  High Energy Physics {\bf 2021}, 1--13}~(2021).

\bibitem{physics5010001}
Suddhasattwa Brahma, Robert Brandenberger, and Samuel Laliberte.
\newblock ``Bfss matrix model cosmology: Progress and challenges''.
\newblock \href{https://dx.doi.org/10.3390/physics5010001}{Physics {\bf 5},
  1--10}~(2023).

\bibitem{PhysRevLett.100.070502}
Michael~M. Wolf, Frank Verstraete, Matthew~B. Hastings, and J.~Ignacio Cirac.
\newblock ``Area laws in quantum systems: Mutual information and
  correlations''.
\newblock \href{https://dx.doi.org/10.1103/PhysRevLett.100.070502}{Phys. Rev.
  Lett. {\bf 100}, 070502}~(2008).

\bibitem{ambjorn_causal_2013}
Jan Ambjorn, Andrzej Görlich, Jerzy Jurkiewicz, and Renate Loll.
\newblock ``Causal dynamical triangulations and the search for a theory of
  quantum gravity''.
\newblock \href{https://dx.doi.org/10.1142/S021827181330019X}{Int. J. Mod.
  Phys. D {\bf 22}, 1330019}~(2013).

\bibitem{rovelli_quantum_2004}
Carlo Rovelli.
\newblock ``Quantum {Gravity}''.
\newblock \href{https://dx.doi.org/10.1017/CBO9780511755804}{Cambridge
  {Monographs} on {Mathematical} {Physics}}. Cambridge University Press.
  ~(2004).

\bibitem{rovelli_covariant_2014}
Carlo Rovelli and Francesca Vidotto.
\newblock ``Covariant loop quantum gravity: An elementary introduction to
  quantum gravity and spinfoam theory''.
\newblock
  \href{https://dx.doi.org/https://doi.org/10.1017/CBO9781107706910}{Cambridge
  University Press}. ~(2014).

\bibitem{christ_weights_1982}
N.~H. Christ, R.~Friedberg, and T.~D. Lee.
\newblock ``Weights of links and plaquettes in a random lattice''.
\newblock \href{https://dx.doi.org/10.1016/0550-3213(82)90124-9}{Nuclear
  Physics B {\bf 210}, 337--346}~(1982).

\bibitem{ren_matter_1988}
Hai-cang Ren.
\newblock ``Matter fields in lattice gravity''.
\newblock \href{https://dx.doi.org/10.1016/0550-3213(88)90281-7}{Nuclear
  Physics B {\bf 301}, 661--684}~(1988).

\bibitem{teixeira_random_2013}
F.~L. Teixeira.
\newblock ``Random {Lattice} {Gauge} {Theories} and {Differential}
  {Forms}''~(2013).
\newblock  url:~\url{http://arxiv.org/abs/1304.3485}.

\bibitem{yamamoto_lattice_2014}
Arata Yamamoto.
\newblock ``Lattice {QCD} in curved spacetimes''.
\newblock \href{https://dx.doi.org/10.1103/PhysRevD.90.054510}{Phys. Rev. D
  {\bf 90}, 054510}~(2014).

\bibitem{Brower:2016moq}
Richard~C. Brower, George Fleming, Andrew Gasbarro, Timothy Raben, Chung-I Tan,
  and Evan Weinberg.
\newblock ``{Quantum Finite Elements for Lattice Field Theory}''.
\newblock \href{https://dx.doi.org/10.22323/1.251.0296}{PoS {\bf LATTICE2015},
  296}~(2016).
\newblock  \href{http://arxiv.org/abs/1601.01367}{arXiv:1601.01367}.

\bibitem{caflisch_monte_1998}
Russel~E. Caflisch.
\newblock ``Monte carlo and quasi-monte carlo methods''.
\newblock \href{https://dx.doi.org/10.1017/S0962492900002804}{Acta Numerica
  {\bf 7}, 1--49}~(1998).

\bibitem{zappa2018monte}
Emilio Zappa, Miranda Holmes-Cerfon, and Jonathan Goodman.
\newblock ``Monte carlo on manifolds: sampling densities and integrating
  functions''.
\newblock
  \href{https://dx.doi.org/https://doi.org/10.48550/arXiv.1702.08446}{Communications
  on Pure and Applied Mathematics {\bf 71}, 2609--2647}~(2018).

\bibitem{de2018quasi}
Stefano De~Marchi and Giacomo Elefante.
\newblock ``Quasi-monte carlo integration on manifolds with mapped
  low-discrepancy points and greedy minimal riesz s-energy points''.
\newblock
  \href{https://dx.doi.org/https://doi.org/10.1016/j.apnum.2017.12.017}{Applied
  Numerical Mathematics {\bf 127}, 110--124}~(2018).

\bibitem{poisson_motion_2011}
Eric Poisson, Adam Pound, and Ian Vega.
\newblock ``The motion of point particles in curved spacetime''.
\newblock \href{https://dx.doi.org/10.12942/lrr-2011-7}{Living Reviews in
  Relativity {\bf 14}, 7}~(2011).

\bibitem{khavkine_algebraic_2015}
Igor Khavkine and Valter Moretti.
\newblock ``Algebraic {QFT} in curved spacetime and quasifree {Hadamard}
  states: an introduction''.
\newblock In Advances in algebraic quantum field theory.
\newblock Pages 191--251.
\newblock Springer~(2015).

\bibitem{gerard_introduction_2018}
Christian Gérard.
\newblock ``An introduction to quantum field theory on curved spacetimes''.
\newblock
  \href{https://dx.doi.org/https://doi.org/10.1017/9781108186612.004}{Page
  171–218}.
\newblock London Mathematical Society Lecture Note Series. Cambridge University
  Press. ~(2018).

\bibitem{misner_gravitation_1973}
Charles~W. Misner, Kip~S. Thorne, and John~Archibald Wheeler.
\newblock ``Gravitation''.
\newblock W. H. Freeman. ~(1973).
\newblock 1st edition.

\bibitem{itzykson_quantum_2012}
Claude Itzykson and Jean-Bernard Zuber.
\newblock ``Quantum field theory''.
\newblock Courier Corporation. ~(2012).

\bibitem{PhysRevB.40.546}
Matthew P.~A. Fisher, Peter~B. Weichman, G.~Grinstein, and Daniel~S. Fisher.
\newblock ``Boson localization and the superfluid-insulator transition''.
\newblock \href{https://dx.doi.org/10.1103/PhysRevB.40.546}{Phys. Rev. B {\bf
  40}, 546--570}~(1989).

\bibitem{dutta2015non}
Omjyoti Dutta, Mariusz Gajda, Philipp Hauke, Maciej Lewenstein, Dirk-S{\"o}ren
  L{\"u}hmann, Boris~A Malomed, Tomasz Sowi{\'n}ski, and Jakub Zakrzewski.
\newblock ``Non-standard hubbard models in optical lattices: a review''.
\newblock \href{https://dx.doi.org/10.1088/0034-4885/78/6/066001}{Reports on
  Progress in Physics {\bf 78}, 066001}~(2015).

\bibitem{mivehvar2021cavity}
Farokh Mivehvar, Francesco Piazza, Tobias Donner, and Helmut Ritsch.
\newblock ``Cavity qed with quantum gases: new paradigms in many-body
  physics''.
\newblock
  \href{https://dx.doi.org/https://doi.org/10.1080/00018732.2021.1969727}{Advances
  in Physics {\bf 70}, 1--153}~(2021).

\bibitem{arovas2022hubbard}
Daniel~P Arovas, Erez Berg, Steven~A Kivelson, and Srinivas Raghu.
\newblock ``The hubbard model''.
\newblock
  \href{https://dx.doi.org/https://doi.org/10.1146/annurev-conmatphys-031620-102024}{Annual
  review of condensed matter physics {\bf 13}, 239--274}~(2022).

\bibitem{aidelsburger2022cold}
Monika Aidelsburger, Luca Barbiero, Alejandro Bermudez, Titas Chanda, Alexandre
  Dauphin, Daniel Gonz{\'a}lez-Cuadra, Przemys{\l}aw~R Grzybowski, Simon Hands,
  Fred Jendrzejewski, Johannes J{\"u}nemann, et~al.
\newblock ``Cold atoms meet lattice gauge theory''.
\newblock
  \href{https://dx.doi.org/https://doi.org/10.1098/rsta.2021.0064}{Philosophical
  Transactions of the Royal Society A {\bf 380}, 20210064}~(2022).

\bibitem{BEMANI2018186}
F.~Bemani, R.~Roknizadeh, and M.H. Naderi.
\newblock ``Quantum simulation of discrete curved spacetime by the
  bose–hubbard model: From analog acoustic black hole to quantum phase
  transition''.
\newblock
  \href{https://dx.doi.org/https://doi.org/10.1016/j.aop.2017.11.010}{Annals of
  Physics {\bf 388}, 186--196}~(2018).

\bibitem{PhysRevLett.96.031103}
Sean~A. Hayward.
\newblock ``Formation and evaporation of nonsingular black holes''.
\newblock \href{https://dx.doi.org/10.1103/PhysRevLett.96.031103}{Phys. Rev.
  Lett. {\bf 96}, 031103}~(2006).

\bibitem{peng_introduction_1998}
Jin-Sheng Peng.
\newblock ``Introduction {To} {Modern} {Quantum} {Optics}''.
\newblock \href{https://dx.doi.org/https://doi.org/10.1142/3770}{Wspc}.
  Singapore~(1998).
\newblock 1st edition.

\bibitem{gardiner_quantum_2004}
Crispin Gardiner, Peter Zoller, and Peter Zoller.
\newblock ``Quantum noise: a handbook of {Markovian} and non-{Markovian}
  quantum stochastic methods with applications to quantum optics''.
\newblock Springer Science \& Business Media. ~(2004).

\bibitem{breuer_theory_2007}
Heinz-Peter Breuer and Francesco Petruccione.
\newblock ``The {Theory} of {Open} {Quantum} {Systems}''.
\newblock Oxford University Press. Oxford~(2007).

\bibitem{azouit_adiabatic_2016}
Remi Azouit, Alain Sarlette, and Pierre Rouchon.
\newblock ``Adiabatic elimination for open quantum systems with effective
  {Lindblad} master equations''~(2016).
\newblock  url:~\url{http://arxiv.org/abs/1603.04630}.

\bibitem{Bassi_2017}
Angelo Bassi, André Großardt, and Hendrik Ulbricht.
\newblock ``Gravitational decoherence''.
\newblock \href{https://dx.doi.org/10.1088/1361-6382/aa864f}{Classical and
  Quantum Gravity {\bf 34}, 193002}~(2017).

\bibitem{viennot2017adiabatic}
David Viennot and Olivia Moro.
\newblock ``Adiabatic transport of qubits around a black hole''.
\newblock
  \href{https://dx.doi.org/https://doi.org/10.1088/1361-6382/aa5b5c}{Classical
  and Quantum Gravity {\bf 34}, 055005}~(2017).

\bibitem{kramer_localization_1993}
B.~Kramer and A.~MacKinnon.
\newblock ``Localization: theory and experiment''.
\newblock \href{https://dx.doi.org/10.1088/0034-4885/56/12/001}{Rep. Prog.
  Phys. {\bf 56}, 1469--1564}~(1993).

\bibitem{sachdev1999quantum}
Subir Sachdev.
\newblock ``Quantum phase transitions''.
\newblock
  \href{https://dx.doi.org/https://doi.org/10.1017/CBO9780511973765}{Physics
  world {\bf 12}, 33}~(1999).

\bibitem{PhysRevA.76.032116}
E.~M. Graefe and H.~J. Korsch.
\newblock ``Semiclassical quantization of an $n$-particle bose-hubbard model''.
\newblock \href{https://dx.doi.org/10.1103/PhysRevA.76.032116}{Phys. Rev. A
  {\bf 76}, 032116}~(2007).

\bibitem{Veksler_2015}
Hagar Veksler and Shmuel Fishman.
\newblock ``Semiclassical analysis of bose–hubbard dynamics''.
\newblock \href{https://dx.doi.org/10.1088/1367-2630/17/5/053030}{New Journal
  of Physics {\bf 17}, 053030}~(2015).

\bibitem{Brewin_2009}
Leo Brewin.
\newblock ``Riemann normal coordinate expansions using cadabra''.
\newblock \href{https://dx.doi.org/10.1088/0264-9381/26/17/175017}{Classical
  and Quantum Gravity {\bf 26}, 175017}~(2009).

\bibitem{venkata_rao_jaya:_2016}
Ravipudi Venkata~Rao.
\newblock ``Jaya: {A} simple and new optimization algorithm for solving
  constrained and unconstrained optimization problems''.
\newblock \href{https://dx.doi.org/10.5267/j.ijiec.2015.8.004}{International
  Journal of Industrial Engineering Computations {\bf 7}, 19--34}~(2016).

\bibitem{bonnans2006numerical}
Joseph-Fr{\'e}d{\'e}ric Bonnans, Jean~Charles Gilbert, Claude Lemar{\'e}chal,
  and Claudia~A Sagastiz{\'a}bal.
\newblock ``Numerical optimization: theoretical and practical aspects''.
\newblock Springer Science \& Business Media. ~(2006).

\bibitem{gelfand_calculus_2000}
Izrail~Moiseevitch Gelfand, Richard~A Silverman, and {others}.
\newblock ``Calculus of variations''.
\newblock Courier Corporation. ~(2000).

\bibitem{colombeau_elementary_2011}
J.~F. Colombeau.
\newblock ``Elementary {Introduction} to {New} {Generalized} {Functions}''.
\newblock Elsevier. ~(2011).

\end{thebibliography}

\end{document}